\documentclass[11pt]{article}
\usepackage[english]{babel}
\usepackage[letterpaper,margin=1in]{geometry}
\usepackage{amsmath}
\usepackage{amsthm}
\usepackage{graphicx}
\usepackage{soul}
\usepackage{thmtools}
\usepackage{thm-restate}
\usepackage[colorlinks=true, allcolors=blue]{hyperref}
\usepackage{cleveref}
\usepackage{xcolor}
\definecolor{MyPurple}{RGB}{111,0,255}

\usepackage[utf8]{inputenc} 
\usepackage{url}            
\usepackage{booktabs}       
\usepackage{amsfonts}       
\usepackage{nicefrac}       
\usepackage{microtype}      
\usepackage{xcolor}

\usepackage{bbm}
\usepackage{amsthm}
\usepackage{amsmath}        
\usepackage{amssymb}        
\usepackage{comment}        
\usepackage{enumitem}  
\usepackage{parskip}
\usepackage{tikz}
\usepackage{subcaption}
\usetikzlibrary{calc}

\newtheorem{proposition}{Proposition}
\newtheorem{theorem}[proposition]{Theorem}
\newtheorem{definition}[proposition]{Definition}

\newtheorem{remark}[proposition]{Remark}
\newtheorem{lemma}[proposition]{Lemma}

\newtheorem{corollary}[proposition]{Corollary}

\newcommand{\OPT}{{\normalfont\text{OPT}}}

\newcommand{\bz}{{\mathbf z}}

\newcommand{\bg}{\boldsymbol g}

\newcommand{\bc}{\boldsymbol c}

\newcommand{\Acal}{\mathcal{A}}

\newcommand{\Ccal}{\mathcal{C}}

\newcommand{\Mcal}{\mathcal{M}}

\newcommand{\Ocal}{\mathcal{O}}

\newcommand{\Rcal}{\mathcal{R}}

\newcommand{\Zcal}{\mathcal{Z}}

\newcommand{\Nbb}{\mathbb{N}}

\newcommand{\Qbb}{\mathbb{Q}}
\newcommand{\Rbb}{\mathbb{R}}
\newcommand{\R}{\mathbb{R}}

\newcommand{\Zbb}{\mathbb{Z}}

\newcommand{\bx}{{\boldsymbol x}}

\newcommand{\by}{{\boldsymbol y}}

\newcommand{\oneb}{\boldsymbol{\mathbbm{1}}}

\newcommand{\1}{\mathbbm{1}}

\newcommand{\OT}{\text{OT}}
\newcommand{\Cost}{{\normalfont \texttt{Cost}}}

\newcommand{\ceil}[1]{{\lceil #1 \rceil}}
\newcommand{\floor}[1]{{\lfloor #1 \rfloor}}

\DeclareMathOperator*{\argmin}{arg\,min}

\usepackage[colorinlistoftodos,prependcaption,textsize=scriptsize]{todonotes}

\author{Christian Coester\\ University of Oxford \and Romain Cosson\\ New York University}
\title{Randomized $k$-server in polynomial time}

\begin{document}

\date{}
\maketitle

\begin{abstract}
We study the design of computationally efficient randomized algorithms for the $k$-server problem. Existing randomized algorithms with the best known competitive ratios are, on the one hand, inherently implicit and, on the other hand, employ a rounding scheme that maintains a distribution over exponentially many configurations. In this work, we introduce a derandomization framework that transforms any randomized $k$-server algorithm on a hierarchically separated tree into one that uses only $O(\log k)$ random bits for request sequences of arbitrary length—hence maintaining a distribution over only polynomially many server configurations. Leveraging this black-box derandomization, we obtain the first polynomial-time randomized $k$-server algorithm on arbitrary $n$-point metrics with a polylogarithmic competitive ratio. Our results also have implications for the advice complexity of the $k$-server problem.
\end{abstract}

\paragraph{Keywords:} $k$-server, online algorithms, computational complexity, randomized complexity, advice complexity

\thispagestyle{empty}
\clearpage
\pagenumbering{arabic}

\section{Introduction}
The $k$-server problem is a cornerstone in online algorithms, sometimes described as the `holy grail of competitive analysis', and many techniques originally developed for $k$-server went on to impact various other problems in the field. While there has been limited progress on deterministic algorithms for the problem since a competitive ratio of $2k-1$ was established in \cite{koutsoupias1995k}, the study of randomized algorithms has seen recent breakthroughs, culminating in an $\Ocal(\log n\log^2k)$-competitive randomized algorithm on $n$-point metrics via an $\Ocal(\log^2k)$-competitive algorithm on hierarchically separated trees (HSTs) \cite{BubeckCLLM18}, and an $\Omega(\log^2 k)$ lower bound refuting the randomized $k$-server conjecture \cite{bubeck2023randomized}.

Despite the improved information-theoretic understanding of the problem, competitive algorithms that are also computationally efficient are lacking. Among algorithms for general metrics with polynomial (in the metric space encoding size) per-request running time, the best known competitive ratio is exponential in $k$, achieved by the randomized Harmonic algorithm~\cite{Grove91,BartalG00}. The work function algorithm~\cite{koutsoupias1995k}, which achieves the best deterministic competitive ratio, has time-per-request either exponential in $k$ or as an increasing function in the number of preceding requests (i.e., growing unbounded over time)~\cite{raghvendra2022scalable}.

For modern randomized algorithms, it is unclear how to even implement them at all while retaining competitiveness. Although the $k$-server problem is a discrete combinatorial online problem, recent progress relies on continuous methods such as existence results for solutions of differential inclusions, leading to implicit algorithms. The lack of efficiency of the algorithm in~\cite{BubeckCLLM18} was pointed out explicitly in Bubeck's TCS+ talk \cite[minute 40:50]{bubeck2017tcsplus}. This algorithm is defined by a continuous (both in time and space) projected gradient flow. A first step towards making the approach explicit---and towards obtaining a discrete algorithm that could be implemented---was taken by \cite{BuchbinderGMN19}, which replaced the continuous gradient flow with a sequence of Bregman projections, effectively discretizing the algorithm in time. In this paper, we address the problem of discretization in space. 
Specifically, we show that any randomized $k$-server algorithm on hierarchically separated trees (HSTs) can be transformed into an algorithm that uses a finite number of random bits (more precisely, $\Ocal(\log k)$ random bits, regardless of the length of the request sequence). While the basic idea of space discretizations has been present already in variants of paging~\cite{adamaszek2012log,bansal2021efficient}, prior techniques do not extend to $k$-server, as we discuss in Section~\ref{sec:failure}.

Our derandomization resolves two problems that hinder the implementability of current algorithms: first, that they require computing the exact solutions of a sequence of convex optimization problems, whereas finite time/space iterative methods can only provide approximate solutions; second, that they maintain a probability distribution over exponentially many server configurations (specifically, up to $\binom{n}{k}$ server configurations). In contrast, our approach reduces the cardinality of the support of a randomized algorithm to $\Ocal(k^2)$ -- a quantity that is independent of both the number of points in the metric space and the number of requests. 
This leads us to the first polynomial-time randomized $k$-server algorithm that achieves a polylogarithmic competitive ratio, matching the state-of-the-art competitiveness guarantee of \cite{BubeckCLLM18,BuchbinderGMN19}. Our results also have implications for the so-called `advice complexity' of the $k$-server problem \cite{bockenhauer2011advice}. 

\textit{$k$-server problem.} The problem is defined for a metric space $\Mcal = (M,d)$ with $n\geq k$ points. The input is a sequence of points $\rho(\cdot) = (\rho(1),\rho(2),\dots)$, where $\rho(t)\in M$ is called the \textit{request} at time $t\in \Nbb$. The output of the problem is a sequence of server configurations $\Ccal(\cdot) = (\Ccal(1),\Ccal(2),\dots)$, where $\Ccal(t)$ is a subset of $k$ points of $\Mcal$ that represent the position of $k$ servers at time $t$. The output must serve all the requests, meaning that it should satisfy $\forall t\colon \rho(t)\in \Ccal(t)$. 

\textit{Deterministic algorithms.} A deterministic (online) algorithm $\Ccal(\cdot)$ for the $k$-server problem is a function that maps a sequence of revealed requests $\rho(\leq t) =(\rho(1),\dots,\rho(t))$ to a configuration $\Ccal(\rho(\leq t)) \in M^k$ which contains the current request $\rho(t)$. In slight abuse of notation, we denote the configuration $\Ccal(\rho(\leq t))$ by $\Ccal(t)$. The cost of a deterministic algorithm $\Ccal(\cdot)$ on the input sequence $\rho(\cdot)$ is defined by 
\begin{equation*}
    \Cost(\Ccal(\cdot),\rho(\cdot)) = \sum_{t} d(\Ccal(t),\Ccal(t+1)),
\end{equation*}
where the distance between two configurations $d(\Ccal,\Ccal')$ is defined as the minimum weight of a matching between $\Ccal$ and $\Ccal'$. 

\textit{Randomized algorithms.} A randomized (online) algorithm $\Rcal(\cdot)$ is a probability distribution over deterministic (online) algorithms. The cost of a randomized algorithm on an instance is the expected cost on that instance. For a parameter $m\in \Nbb$, an \emph{$m$-barely random algorithm} is defined as the uniform distribution over $m$ deterministic algorithms $\Rcal(\cdot) = (\Ccal_1(\cdot),\dots,\Ccal_m(\cdot))$, and its cost on an instance $\rho(\cdot)$ is therefore given by $\Cost(\Rcal(\cdot),\rho(\cdot)) = \frac{1}{m}\sum_{j\in [m]}\Cost(\Ccal_j(\cdot),\rho(\cdot))$. Note that if $m=2^b$ for an integer $b$, an $m$-barely random algorithm can be sampled with just $b$ random bits, independently of the length of the input sequence--hence the term \textit{barely} random algorithm, introduced in \cite{reingold1994randomized}.

\textit{Advised algorithm.} An online algorithm with advice $\Acal(\cdot)$ is a deterministic algorithm whose input is augmented with some bits of advice supposedly provided by a third party that knows the problem instance. 
For any parameter $m\in\Nbb$, we define an $m$-advised algorithm $\Acal(\cdot)$ as a collection of $m$ deterministic algorithms $\Acal(\cdot)=(\Ccal_1(\cdot),\dots,\Ccal_m(\cdot))$. The cost of an $m$-advised algorithm is defined as 
$\Cost(\Acal(\cdot),\rho(\cdot)) = \min_{j\in [m]} \Cost(\Ccal_j(\cdot),\rho(\cdot))$, and is therefore not greater than its barely random counterpart.

\textit{Competitive analysis.} The performance of an online algorithm for the $k$-server problem is evaluated by its competitive ratio. We say that an online algorithm $\Acal(\cdot)$ is $c$-competitive if it satisfies for any input $\rho(\cdot)$ that 
\begin{equation}\label{eq:competitiveness}
\Cost(\Acal(\cdot),\rho(\cdot))\leq c \cdot \text{OPT}(\rho(\cdot))+c_0,
\end{equation}
where $c$ is a positive real number, $\text{OPT}(\rho(\cdot)) = \inf_{\Ccal(\cdot)} \Cost(\Ccal(\cdot),\rho(\cdot))$ is called the offline optimum and $c_0$ is a constant independent of the sequence of requests.

\paragraph{Main result.} The main result of this paper (Theorem \ref{th:main-complexity}) is a polynomial-time randomized $k$-server algorithm that enjoys state-of-the-art competitive guarantees. The main technical ingredient for this result is a polynomial-time derandomization scheme (Proposition \ref{prop:frac-barely-frac} and Proposition \ref{prop:barely-frac-rand}) transforming a $c$-competitive randomized $k$-server algorithm on hierarchically separated trees (HSTs) into an $\Ocal(c)$-competitive $k$-server algorithm that uses only $\Ocal(\log k)$ random bits. As detailed in the related works section below, this framework extends previous results on the randomized complexity \cite{bockenhauer2009advice} and the advice complexity \cite{emek2011online,bockenhauer2011advice} of the $k$-server problem.

\subsection{Previous work} \cite{manasse1988competitive} introduced the $k$-server problem as a general framework for online problems. 
This initial work focused on the competitive ratio of deterministic algorithms, establishing a universal lower bound of $k$ and conjecturing that a $k$-competitive algorithm exists for all metric spaces. The problem quickly became central in the domain of online algorithms, and, years after, the deterministic analysis of the $k$-server problem was nearly settled by \cite{koutsoupias1995k}, which proved that the work function algorithm is $(2k-1)$-competitive in all metric spaces. In this paper, we are primarily concerned with the randomized and the fractional variants of the $k$-server problem, and we refer to \cite{koutsoupias2009k} for details on the deterministic variant. 

The first randomized analysis of the $k$-server problem was by \cite{fiat1991competitive}. They introduced a simple (and discrete) `marking' algorithm that is $2H_k$-competitive on the uniform metric space (equivalent to the paging problem). In contrast, going beyond the uniform metric space later required the introduction of the fractional variant of the $k$-server problem. The idea is to have the algorithm maintain a (fractional) measure over the metric space, of total mass $k$, representing for each point the probability that a server is located at that point. This effectively allows the algorithm to move probability mass continuously in space, enabling the use of new methods. Any randomized algorithm straightforwardly induces a fractional algorithm, but a reciprocal transformation is only known for the line metric \cite{dehghani2017stochastic} and hierarchically separated trees (HSTs) \cite{bansal2015polylogarithmic} (as well as metrics that embed into HSTs with constant distortion, such as weighted stars, for which this result was established first \cite{bansal2012primal}). This typically requires a nontrivial `rounding' step. 
The idea of studying the fractional variant of the $k$-server problem was first used by \cite{blum1999finely} to obtain refined randomized guarantees on the uniform metric space. 
In subsequent breakthroughs, first \cite{bansal2012primal} used the approach to obtain an $\Ocal(\log k)$-competitive algorithm on weighted stars with a primal-dual algorithm. After, \cite{bansal2015polylogarithmic} obtained an $\tilde\Ocal(\log^3n\log^2k)$-competitive algorithm $n$-point metrics, based on a fractional algorithm on HSTs. To date, the best randomized competitive ratio is due to \cite{BubeckCLLM18}, which presented an $\Ocal(\log^2 k)$-competitive algorithm for HSTs. The focus on hierarchically separated trees is theoretically motivated by random tree embeddings \cite{fakcharoenphol2003tight}, which allow to embed any $n$-point metric space into a hierarchically separated tree with a distortion of $\Ocal(\log n)$. Therefore, the result of \cite{BubeckCLLM18} directly translates into an $\Ocal(\log n \log^2 k)$-competitive algorithm on arbitrary metrics. The question whether an $o(k)$-competitive randomized algorithm exists for the $k$-server problem on arbitrary metrics remains open.

The study of randomized complexity in online algorithms started with the work of \cite{reingold1994randomized} on the list update problem. They noticed that a finite number of random bits (independent of the length of the request sequence) is sufficient to improve the competitive ratio from $2$ to $1.75$ and coined the term `barely random algorithm'. The notion is also covered in \cite[Chap. 2]{borodin2005online}. In the case of the $k$-server problem, a result of this type is in \cite{bockenhauer2009advice,komm2011advice} (see also \cite[Chap. 3]{komm2016introduction}) and is restricted to the uniform metric space. Their result essentially says that there exists an $\Ocal(\log k)$-competitive algorithm for paging that requires only $\Ocal(\log k)$ random bits. This can be recovered as a consequence of our derandomization method since there exist $\Ocal(\log k)$-competitive randomized algorithms on the uniform metric space \cite{fiat1991competitive}. Other works, such as \cite{emek2011online,bockenhauer2011advice, renault2015online}, study the randomized complexity or advice complexity of the $k$-server problem, but in a weaker regime where the number of bits of advice or the number of random bits is allowed to scale with the size of the problem instance. For example, \cite{emek2011online} introduces a deterministic $\Ocal(\sqrt{k})$-competitive algorithm for the $k$-server problem that uses $\Ocal(1)$ bits of advice \emph{per request}.

One important technical tool that we use throughout the paper is the notion of $m$-barely fractional algorithms, which allows us to recover some of the power of fractional algorithms, while enforcing a discretization in space. Informally, an $m$-barely fractional algorithm is one that moves server mass in fractions that are multiples of $1/m$, and consequently, the size of its support is bounded by $km$. Furthermore, applying the rounding of \cite{bansal2015polylogarithmic}, the cardinality of the support of the distribution over deterministic algorithms of the corresponding randomized algorithm can be restricted to $m$. We note that barely fractional algorithms are already instrumental in previous works, such as in \cite{adamaszek2012log,bansal2021efficient}, which study variants of the $k$-server problem on weighted stars (cf. Section~\ref{sec:failure} for further discussion), and in \cite{cosson2024barely}, which studies a multi-agent version of metrical task systems.

\subsection{Preliminaries}

For $x\in \R$, we write $x^+:=\max\{x,0\}$.

\paragraph{Weighted trees.} Let $V$ be the set of nodes of a rooted edge-weighted tree. We denote by $r\in V$ the root and by $L\subseteq V$ the set of leaves. For a node $u\in V\setminus\{r\}$, we denote by $p(u)$ its parent and by $w_u$ the weight of the edge between $u$ and $p(u)$.
For a node $u\in V$, we denote by $C_u$ the set of its children and by $L_u$ the set of leaves in the subtree rooted at $u$.

Any weighted tree induces a metric space on its leaves $L$,\footnote{We do not consider internal nodes to be points of the metric space, and in particular we assume that $k$-server requests never come at internal nodes. This is without loss of generality, as new leaves could be attached to internal nodes with edges of length $0$.} where the distance between two leaves is defined as the sum of edge weights on their connecting path.

For $\tau\ge 1$, a $\tau$-HST (hierarchically separated tree) is the metric space induced by a tree whose edge weights satisfy $w_u =\tau w_{v}$ for all $u \in V\setminus\{r\}$ and $v\in C_u$, and where all leaves are at the same depth. It is well-known that any $n$-point metric space can be probabilistically embedded into an HST such that all distances are distorted by a factor $O(\log n)$ in expectation~\cite{fakcharoenphol2003tight}.

\paragraph{Inner and leaf measures.} An \emph{inner measure} of total mass $k$ on a tree with nodes $V$ is a vector $\bz\in \Zcal$, where 
$$\Zcal = \left\{\bz \in \Rbb_+^V \text{ s.t. } z_r = k \text{ and } \forall u\in V\colon z_u \geq \sum_{c\in C_u} z_c \right\}.$$
Here, $z_u$ represents the amount of server mass in the subtree rooted at $u$. We write $z_{[u]} = z_u-\sum_{c\in C_u}z_c$ for the mass at $u$ itself. If $z_{[u]}=0$ for every $u\in V\setminus L$, then $\bz$ is called a \emph{leaf measure}. Thus, a leaf measure is a measure of mass $k$ on the points $L$ of the associated metric space, whereas an inner measure may have some of the mass on internal nodes, which we do not consider part of the metric space. We say that a measure is \emph{$m$-barely fractional} if it only takes values in $\frac{1}{m}\Nbb = \{0,\frac{1}{m},\frac{2}{m},\dots\}$.

\paragraph{Optimal transport.} We define the transport distance between two measures $\bz,\bz'\in\Zcal$ by 
$$\OT(\bz,\bz') = \sum_{u\in V\setminus\{r\}}w_u|z_u-z_u'|. $$
This definition matches the classical interpretation of optimal transport for measures on trees. 

\paragraph{Fractional algorithm.} A fractional (online) $k$-server algorithm on a tree metric $L$ is a function $\bz(\cdot)$ that takes as input a sequence $\rho(\leq t)$ and produces as output a leaf measure denoted by $\bz(t)$, with total mass $\sum_{u\in L} z_u(t) = k$ and that services requests by satisfying $z_{\rho(t)}(t) = 1$. The cost of a fractional algorithm $\bz(\cdot)$ on some instance $\rho(\cdot)$ is defined by 
$$\Cost(\bz(\cdot),\rho(\cdot)) = \sum_{t}\OT(\bz(t+1),\bz(t)).$$ A fractional algorithm $\bz(\cdot)$ is $m$-barely fractional if it only outputs $m$-barely fractional measures.

\paragraph{Computational complexity of algorithms.} To properly define computational complexity, we restrict our attention to metric spaces where all distances are integral. This assumption is without loss of generality, up to arbitrary precision, because the $k$-server problem is scale-invariant. We use $D$ to denote the diameter of the underlying metric space. Thus, distances are encoded in words of $O(\log D)$ bits. We aim for algorithms that process each new request in time polynomial in $\log D$ and the number of points $n$ in the metric space.

\subsection{\texorpdfstring{Why past approaches do not extend to $\boldsymbol{k}$-server}{Why past approaches do not extend to k-server}}\label{sec:failure}

The idea of working with barely fractional algorithms is already present in \cite{adamaszek2012log}, which applies it to weighted/generalized paging, corresponding to star metrics for $k$-servers. In this case, a barely fractional algorithm can be achieved very simply by rounding each (fully fractional) evicted page mass~$x$ to $f(x)=\frac{1}{k}\lfloor \min(2kx,k)\rfloor$. However, discretizations of this form are not competitive for $k$-server, because tiny fluctuations in $x$ may lead to large movements in $f(x)$ (because $\floor{\cdot}$ is discontinuous!). This issue does not arise on stars, where the evicted page mass $x$ is decreased only when set to $0$, but on trees, it occurs at internal nodes. In fact, all approaches of the form $x\mapsto f(x)$ for a fixed function $f(\cdot)$ are condemned to fail.

Another possible perspective on the rounding of \cite{adamaszek2012log} for star metrics is to assume that the root can hold server mass and to consider the modified algorithm in which upward movements (from a leaf to the root) are twice those of the base algorithm\footnote{Movements in the modified algorithm stop when there is no mass available at the source.} and where requests are served by pulling server mass down from the root to the requested leaf. One can show that the modified algorithm is valid and suffers only a factor-of-2 loss in the movement cost. Furthermore, this transformation naturally extends to metric spaces that can be embedded as the leaves of a tree, by assuming that all non-leaf nodes can hold server mass and that requests are served by pulling server mass down from the nearest ancestors of the requested leaf. In star metrics, the cumulative movements of the modified algorithm can be rounded down to the nearest multiple of $1/k$, yielding exactly the rounding of~\cite{adamaszek2012log}. But in trees, that simple rounding is no longer possible: Consider a node $u$ with two children, each of which sends mass $\frac{1}{2k}$ to its grandparent in the accelerated algorithm (i.e., to $p(u)$). Then the mass traveling from $u$ to $p(u)$ is $\frac{1}{k}$, so we would want to send the same mass after discretization. But each movement from a child of $u$ to $u$ is only $\frac{1}{2k}$, which might be insufficient to cross the next multiple of $\frac{1}{k}$. Hence, the discretized algorithm would not have any movement into $u$ that could be sent further upwards to $p(u)$.

We resolve these kinds of issues via a technique inspired by physics that we can refer to as \textit{hysteresis} (see Section \ref{sec:barely-frac}). Roughly, it ensures that the barely fractional configuration only changes when the proximity to the target configuration improves a lot, i.e., when moving is really worth it.

The question of computational complexity of variants of weighted paging was also raised in a more recent paper of Bansal et al. \cite{bansal2021efficient}, which proposed a polynomial-time algorithm for a generalization of weighted paging at the expense of an overall competitiveness of $O(\log^2 k)$ instead of $O(\log k)$.

\subsection{Exposition of the results and outline}
Our main result is the following.
\begin{restatable}{theorem}{mainComplexity}\label{th:main-complexity}
There exists an $\Ocal(\log^2 k)$-competitive randomized $k$-server algorithm for $\tau$-HST metrics with $\tau\geq 10$ that serves requests in polynomial time in $(n,\log D)$. The algorithm uses only $O(\log k)$ random bits, regardless of the length of the request sequence.
\end{restatable}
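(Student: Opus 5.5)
The proof will be a pipeline of three transformations, each losing at most a constant factor in the competitive ratio and each implementable in polynomial time per request.

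\emph{Step 1: a polynomial-time fractional algorithm.} Start from the $\Ocal(\log^2 k)$-competitive fractional algorithm on HSTs of \cite{BubeckCLLM18}, in the time-discretized form of \cite{BuchbinderGMN19}, where each request triggers a Bregman projection onto a polytope described by polynomially many constraints on $\Zcal$. This convex program can only be solved approximately, say to additive precision $\varepsilon$ in transport distance, in time polynomial in $(n,\log D,\log 1/\varepsilon)$. The issue is that approximation errors can accumulate over an unbounded request sequence. I would absorb them in the next step: the hysteresis rounding reacts only when the target moves by a significant amount, so an inexact inner measure $\tilde\bz(t)$ with $\OT(\tilde\bz(t),\bz(t))\le \varepsilon$ should change the rounded output only when the exact algorithm pays a comparable cost. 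An alternative is to charge the $\varepsilon$ errors to $\OPT$, choosing $\varepsilon \le 1/\mathrm{poly}(n)$ relative to the minimum positive distance, which is at least $1$ by integrality. Any request that forces movement costs $\OPT$ at least $1$ once the configuration has to change.

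\emph{Step 2: fractional to barely fractional.} Apply Proposition~\ref{prop:frac-barely-frac} with $m=\Ocal(k)$. It converts the (approximate) fractional algorithm into an $m$-barely fractional leaf-measure algorithm with cost $\Ocal(c)$; this is where $\tau\ge 10$ is needed. The construction is hysteresis-based: the discretized configuration is updated only when its proximity to the target improves substantially. Each update is a greedy local computation on the tree with $\frac1m$-granular masses. It therefore runs in time polynomial in $n$ and $\log D$, and the number of mass units moved per request is bounded by $\Ocal(km)$.

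\emph{Step 3: barely fractional to barely random.} Apply Proposition~\ref{prop:barely-frac-rand}, the HST rounding of \cite{bansal2015polylogarithmic} specialized to $\frac1m$-granular measures. It yields $m$ deterministic algorithms $\Ccal_1,\dots,\Ccal_m$ whose uniform mixture has marginals equal to the barely fractional measure and expected cost $\Ocal(1)$ times its cost. Each $\Ccal_j$ is updated in polynomial time because the support has size $m=\Ocal(k)$. Sampling $j$ uniformly uses $\lceil\log_2 m\rceil=\Ocal(\log k)$ random bits, independently of the sequence length. Composing the constant-factor losses gives $\Ocal(\log^2 k)$ competitiveness.

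The main obstacle is Step 1: ensuring that inexact convex solutions neither break feasibility (the requested leaf must receive mass $1$) nor inflate the cost over long sequences. I would first try to enforce feasibility by post-projecting the approximate solution onto the exact constraint set on the tree, which is a cheap flow computation. I would then bound the cumulative error by showing that the potential-function analysis of \cite{BuchbinderGMN19} tolerates additive $\varepsilon$ perturbations per step. The resulting total error $\varepsilon T$ must then be charged either to $\OPT$ (which grows by at least $1$ on every request that is not free) or to the hysteresis slack of Step 2.
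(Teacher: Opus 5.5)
\textbf{Gap: the $\varepsilon T$ error term is never absorbed.} Your pipeline (approximate fractional, then barely fractional, then barely random) is the paper's. However, the step you identify as the main obstacle is left open, and neither of your two fixes works.

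Charging $\varepsilon T$ to $\OPT$ fails because $\OPT$ need not increase on requests that force the online algorithm to move. Take $k$ clustered points plus one far point, with requests only inside the cluster. Then $\OPT$ stays bounded, while the fractional algorithm of \cite{BuchbinderGMN19} moves a tiny amount on every request. So $\varepsilon T$ is unbounded relative to $\OPT$ for every fixed $\varepsilon>0$. Making $\varepsilon$ polynomially small does not help, since $T$ is unbounded.

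The hysteresis step does not absorb the error either. Its guarantee is only the amortized inequality that the cost of $\bz^{(2)}(\cdot)$ is at most the cost of $\bz^{(1)}(\cdot)$, so a per-step perturbation of the target reappears as an $\Ocal(\varepsilon T)$ term. You also have no pointwise bound $\OT(\tilde\bz(t),\bz(t))\le\varepsilon$ against the exact trajectory: each approximate projection starts from the perturbed previous point, so errors compound.

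\textbf{Missing idea: charge the error to the algorithm's own cost.} This becomes possible only after discretization and after making the algorithm lazy, as in Proposition~\ref{prop:additive}.
\begin{itemize}
\item Feed the whole pipeline only the subsequence $\rho'$ of requests not already served by the current $m$-barely fractional configuration, i.e.\ those with $z_{\rho(t)}(t-1)<1$.
\item On each of these $T'$ requests, the output must bring at least $1/m$ mass to the requested leaf from distance at least $1$. Hence its cost is at least $T'/m$.
\item Choose $\varepsilon=\Theta(1/m)$, small enough that after the constant-factor blowup of Proposition~\ref{prop:frac-barely-frac} the additive term is at most $T'/(2m)$. Then $T'/m\le \Cost\le c\,\OPT(\rho')+T'/(2m)+c_0$.
\item This gives $T'/(2m)\le c\,\OPT(\rho')+c_0$, and therefore $\Cost\le 2c\,\OPT(\rho)+2c_0$, using $\OPT(\rho')\le\OPT(\rho)$.
\item Only then round with Proposition~\ref{prop:barely-frac-rand}.
\end{itemize}

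\textbf{Smaller corrections.}
\begin{itemize}
\item Proposition~\ref{prop:frac-barely-frac} requires $m\ge 2k^2+k$, so $m=\Ocal(k)$ is not admissible. Take $m=\Theta(k^2)$; this still gives $\Ocal(\log k)$ random bits and $\log(1/\varepsilon)=\Ocal(\log k)$.
\item $\tau\ge 10$ is not used in Proposition~\ref{prop:frac-barely-frac}, which holds on arbitrary trees. It enters through the potential-function analysis of \cite{BuchbinderGMN19} for the fractional algorithm, while the rounding needs $\tau>5$.
\item The anti-server polytope has exponentially many constraints, not polynomially many. Polynomial time comes from a separation oracle combined with the ellipsoid method.
\end{itemize}
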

By classical polynomial time random HST embeddings~\cite{fakcharoenphol2003tight}, we obtain the following extension of the theorem to general metrics.
\begin{corollary}\label{cor:generalMetrics}
There exists a randomized $\Ocal(\log n \log^2 k)$-competitive $k$-server algorithm on arbitrary $n$-point metrics that serves requests in polynomial time. 
\end{corollary}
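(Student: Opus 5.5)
The immediately preceding statement is Corollary~\ref{cor:generalMetrics}, and I will derive it from Theorem~\ref{th:main-complexity} by composing it with a random tree embedding, in three steps.

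\emph{Step 1: sample an embedding.} Given an $n$-point metric $\Mcal=(M,d)$ with integral distances, I first sample in polynomial time, before any request arrives, a random embedding of $\Mcal$ into a $\tau$-HST with $\tau=10$. I use the construction of~\cite{fakcharoenphol2003tight} with base $\tau$ instead of $2$. This gives a tree $T$ whose leaves are identified with the points of $M$ and which satisfies
\[
d_T(x,y)\ge d(x,y)\quad\text{and}\quad \mathbb{E}[d_T(x,y)]\le \Ocal(\tau\log n)\,d(x,y)=\Ocal(\log n)\,d(x,y)
\]
for all points $x,y$. I also need the tree to meet the requirements of Theorem~\ref{th:main-complexity}:
\begin{itemize}[noitemsep]
\item It has $\mathrm{poly}(n)$ nodes. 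Chains of degree-one nodes can be kept, since their number is $\Ocal(\log D)$ per leaf; alternatively, the theorem is invoked with a node count polynomial in $(n,\log D)$.
\item All leaves are at the same depth. This holds in the standard construction.
\item Edge weights are integral after scaling, with diameter $D_T=\Ocal(\tau D)$, so $\log D_T=\Ocal(\log D)$.
\end{itemize}
Distinct points of $M$ map to distinct leaves, and the algorithm of Theorem~\ref{th:main-complexity} never places mass outside the leaves. Therefore it induces configurations of $k$ points of $M$.

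\emph{Step 2: run the HST algorithm.} I then run the algorithm $\Acal_T$ of Theorem~\ref{th:main-complexity} on $T$, feeding it each request, and I output the same configurations viewed as subsets of $M$. Each request is processed in time polynomial in $(n,\log D)$. Since $d\le d_T$ pointwise, the matching cost of every configuration change in $\Mcal$ is at most its cost in $T$. Fix a request sequence $\rho$ and let $\Ccal^*$ be an optimal offline solution in $\Mcal$. Conditioning on $T$ gives
\[
\mathbb{E}[\Cost_{\Mcal}]\le \mathbb{E}_T\big[\Cost_T(\Acal_T,\rho)\big]\le \mathbb{E}_T\big[c\,\OPT_T(\rho)+c_0(T)\big]\le c\,\mathbb{E}_T\big[\Cost_T(\Ccal^*,\rho)\big]+\mathbb{E}[c_0(T)],
\]
where $c=\Ocal(\log^2 k)$. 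By linearity, $\mathbb{E}_T[\Cost_T(\Ccal^*,\rho)]\le \Ocal(\log n)\,\OPT_{\Mcal}(\rho)$, because each matching cost is a sum of distances that are distorted by $\Ocal(\log n)$ in expectation.

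\emph{Step 3: the additive constant.} The only subtle point is $c_0(T)$, which must be independent of $\rho$. The additive constant from the HST algorithm is typically bounded by something like $k\cdot\mathrm{diam}(T)$, which is $\Ocal(kD)$ for every possible tree. Hence $\mathbb{E}[c_0(T)]$ is a finite constant independent of the request sequence. I would verify this from the proof of the theorem; alternatively, I would start all servers at a fixed configuration so that the constant is bounded uniformly over trees. This step, together with checking that the HST produced by~\cite{fakcharoenphol2003tight} satisfies the exact format of Theorem~\ref{th:main-complexity} (equal leaf depth and weight ratio exactly $\tau$), is where I expect the only real care to be needed. Combining the three steps gives competitive ratio $\Ocal(\log n\log^2 k)$ with polynomial-time request processing.
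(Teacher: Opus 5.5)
Your proposal is correct and follows the same route as the paper, which obtains Corollary~\ref{cor:generalMetrics} by composing Theorem~\ref{th:main-complexity} with the polynomial-time random HST embedding of~\cite{fakcharoenphol2003tight}; your checks (dominance $d\le d_T$, bounding $\mathbb{E}[\Cost_T(\Ccal^*,\rho)]$ via a fixed $d$-optimal matching, $O(n\log D)$ tree nodes, equal leaf depth) are exactly the details the paper leaves implicit. On Step 3, the additive term that Theorem~\ref{th:main-complexity} inherits from Proposition~\ref{prop:approximate-smile} is $O(D_T\log^2 k)$ rather than $k\cdot\mathrm{diam}(T)$, and it comes from the bounded potential $Q$, so fixing the initial configuration alone would not control it; since $D_T=O(\tau D)$ for every tree in the support of the embedding, your conclusion that $\mathbb{E}[c_0(T)]$ is a constant independent of $\rho$ still holds.
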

We note that this algorithm for general metrics is still barely random as it only samples a bounded number of random bits at the start of the program. Most of these bits are used to compute the HST embedding.

By observing that bits of advice are at least as useful as random bits (see discussion above), we also obtain the following corollary. 
\begin{corollary} For any $\tau$-HST metric, with $\tau\geq 10$, there exists a deterministic $k$-server algorithm with advice that is $\Ocal(\log^2 k)$-competitive and that only uses $\Ocal(\log k)$ bits of advice. 
\end{corollary}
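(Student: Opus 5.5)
The statement is the last corollary, and I would derive it directly from Theorem~\ref{th:main-complexity} by turning random bits into advice bits.

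Theorem~\ref{th:main-complexity} provides an $\Ocal(\log^2 k)$-competitive randomized algorithm on $\tau$-HSTs with $\tau\geq 10$ that uses $b=\Ocal(\log k)$ random bits, sampled once at the start. Fix each of the $m=2^b$ possible outcomes of these bits. Each outcome gives a deterministic online algorithm $\Ccal_j(\cdot)$, for $j\in[m]$. Since the algorithm samples uniformly over bit strings, it is exactly the $m$-barely random algorithm $\Rcal(\cdot)=(\Ccal_1(\cdot),\dots,\Ccal_m(\cdot))$. If the theorem's construction were not literally uniform over $2^b$ strings, I would pad the number of bits so that it is, without changing the cost.

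Next, take the $m$-advised algorithm $\Acal(\cdot)=(\Ccal_1(\cdot),\dots,\Ccal_m(\cdot))$. The advice is the index $j^\star$ of the best member, chosen by an oracle that knows $\rho(\cdot)$, and it is encoded in $\ceil{\log_2 m}=b=\Ocal(\log k)$ bits given at the start. For every request sequence the minimum is at most the average, so
\[
\Cost(\Acal(\cdot),\rho(\cdot))=\min_{j\in[m]}\Cost(\Ccal_j(\cdot),\rho(\cdot))\le \frac1m\sum_{j\in[m]}\Cost(\Ccal_j(\cdot),\rho(\cdot))=\Cost(\Rcal(\cdot),\rho(\cdot)).
\]
Combined with the theorem, this gives $\Cost(\Acal(\cdot),\rho(\cdot))\le c\,\OPT(\rho(\cdot))+c_0$ with $c=\Ocal(\log^2 k)$. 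Once $j^\star$ is fixed, the algorithm simply runs $\Ccal_{j^\star}$, which is deterministic.

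There is only one subtle point. The randomized guarantee must hold with the same additive constant $c_0$ for every input, not just for inputs fixed in advance. This is the standard oblivious-adversary notion in the competitive ratio definition, so the inequality above applies pointwise in $\rho(\cdot)$. No other obstacle is expected, since polynomial running time is not even required here.
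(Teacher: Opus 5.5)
Your proposal is correct and is essentially the paper's own argument. The paper obtains this corollary from Theorem~\ref{th:main-complexity} by viewing the $m$-barely random algorithm $(\Ccal_1(\cdot),\dots,\Ccal_m(\cdot))$ as an $m$-advised algorithm, whose cost $\min_{j}\Cost(\Ccal_j(\cdot),\rho(\cdot))$ is at most the average, with the advice being the $\ceil{\log_2 m}=\Ocal(\log k)$-bit index of the best member. Your padding remark is not needed: the advice only has to encode an index in $[m]$, and in any case Proposition~\ref{prop:frac-barely-frac} lets you choose $m\geq 2k^2+k$ to be a power of two.
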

We now streamline the main arguments leading to Theorem \ref{th:main-complexity}. We start with the fractional algorithm of \cite{BuchbinderGMN19}, that can be adapted to run in polynomial time albeit at the expense of an additive penalty, as stated below. 
\begin{restatable}{proposition}{approximateSmile}
\label{prop:approximate-smile}
Let $\epsilon>0$. There exists a 
fractional $k$-server algorithm $\bz(\cdot)$ for $\tau$-HST metrics with $\tau\geq 10$ that serves each request in polynomial time in $(n,\log 1/\epsilon, \log D)$, and satisfying for all requests sequences $\rho(\cdot)$
$$\Cost(\bz(\cdot),\rho(\cdot))\leq \Ocal(\log^2 k) \OPT(\rho(\cdot))  + \epsilon T +O(D\log^2 k),$$
where $T$ is the length of the sequence $\rho(\cdot)$.
\end{restatable}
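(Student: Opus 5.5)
We start from the discrete-time fractional algorithm of \cite{BuchbinderGMN19} on $\tau$-HSTs. In that algorithm each request $\rho(t)$ is handled by one Bregman projection: the new inner measure is
\[
\bz(t)=\argmin_{\bz\in\Zcal,\ z_{\rho(t)}=1}\ D_\Phi(\bz\,\|\,\bz(t-1)),
\]
where $\Phi$ is the weighted multiscale entropy regularizer on the polytope $\Zcal$. Its analysis gives a potential-function bound. With $\Psi$ a potential combining the Bregman divergence to the offline configuration and a term bounded by $O(D\log^2 k)$,
\[
\OT(\bz(t),\bz(t-1))+\Psi(t)-\Psi(t-1)\le \Ocal(\log^2 k)\cdot \text{(offline movement at time } t).
\]
Our plan is to replace the exact projection by an approximate one computed in polynomial time. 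We then show the analysis tolerates an additive error $\epsilon$ per step, which accounts for the $\epsilon T$ term. The $O(D\log^2k)$ term is the usual initial-potential constant.

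\textbf{Step 1: computing the projection.} The projection is a convex program over a polytope with $O(n)$ variables and $O(n)$ linear constraints, with a smooth, self-concordant-type objective (sums of $x\log x$ terms with shifts). The ellipsoid method or an interior point method therefore finds a feasible $\tilde\bz(t)$ whose objective is within $\delta$ of optimal, in time $\mathrm{poly}(n,\log(1/\delta),\log D)$. Feasibility is exact if we work inside the polytope; we enforce the request constraint exactly by fixing the coordinate. We also keep all coordinates bounded away from $0$ by the shift parameter in the entropy. This keeps gradients bounded by $\mathrm{poly}(D,k)$, so bit complexity stays polynomial. To avoid accumulating precision, we round $\tilde\bz(t)$ to a grid of mesh $\delta$ while keeping it in $\Zcal$, for example by adjusting mass at internal nodes. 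Finally, we output the leaf measure obtained by pushing internal mass to leaves (as in \cite{BuchbinderGMN19}), which costs at most a constant factor on $\tau$-HSTs with $\tau\ge 10$.

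\textbf{Step 2: robustness of the potential analysis.} We rerun the one-step inequality with $\tilde\bz(t)$ in place of the exact projection $\bz^*(t)$. Strong convexity of $D_\Phi$ relative to a weighted norm gives $\|\tilde\bz(t)-\bz^*(t)\|\le \mathrm{poly}(D,k)\sqrt{\delta}$. The extra movement cost is then at most $\OT(\tilde\bz(t),\bz^*(t))\le \mathrm{poly}(n,D,k)\sqrt\delta$. The change in potential is at most a Lipschitz constant of $\Psi$ (polynomial in $D,k,n$, using the lower bound on coordinates) times the same distance. Choosing $\delta=\epsilon^2/\mathrm{poly}(n,D,k)$ makes the total per-step error at most $\epsilon$, and $\log(1/\delta)=O(\log(1/\epsilon)+\log(nDk))$ keeps the running time polynomial. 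Summing over $t$ and telescoping $\Psi$ gives the claimed bound.

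\textbf{Main obstacle.} The hardest part is Step 2. The original analysis uses exact KKT optimality conditions of the projection, namely specific Lagrange multipliers, to bound the movement cost, rather than only a Lipschitz argument. We would therefore first try to express the per-step guarantee solely through the three-point (generalized Pythagorean) inequality for Bregman projections. That inequality degrades gracefully under approximate optimality, at the price of an additive $\delta$-type term.

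If the movement-cost bound truly needs the multipliers, we would instead solve the KKT system approximately. The projection has a near-closed form in which only one scalar multiplier per node is found by nested monotone equations over the tree. These equations can be solved by binary search to precision $\delta$ bottom-up, which keeps the error explicit.
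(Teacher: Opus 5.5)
There is a genuine gap in Step 1: the convex program you propose to solve is not the one whose analysis you invoke. The algorithm of \cite{BuchbinderGMN19} does not project in $\Zcal$. It projects, under the shifted weighted entropy $D(\bx\parallel\bx(t-1))$, onto the anti-server polytope $P_\delta\subseteq[0,1]^\chi$. That polytope has one coordinate $x_{uj}$ for every $u\in V$ and $j\in[n_u]$, up to roughly $n^2$ coordinates. It has one constraint $\sum_{i\le|S|}x_{ui}\le\sum_{(v,j)\in S}x_{vj}$ for every internal node $u$ and every subset $S\subseteq\chi_u$, i.e.\ exponentially many.

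The one-step $\Ocal(\log^2k)$ potential inequality is proved for that projection, not for an entropic projection in $\Zcal$. For that projection your premise of $\Ocal(n)$ variables and $\Ocal(n)$ linear constraints is false. So neither the ellipsoid method nor an interior point method can be invoked without saying how the exponentially many constraints are handled.

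The missing ingredient, which the paper supplies, is a polynomial-time separation oracle. For a fixed size $|S|$ the right-hand side is minimized by the $|S|$ smallest values $x_{vj}$, so after sorting it suffices to check the $n_u$ prefix sets at each node. The ellipsoid method then only returns a point within some $\eta$ of the feasible set, so you also need a repair map back into $P_\delta$ that moves the point by $\Ocal(\eta)$; the paper enforces the constraints top-down in the tree. This repair is needed for two reasons. Your strong-convexity bound on the distance to the exact projection requires the approximate point to be feasible. And the next step must again start from a point of $P_\delta$, where the one-step inequality applies. Your sentence ``feasibility is exact if we work inside the polytope'' hides exactly this issue.

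With Step 1 repaired this way, your Step 2 is precisely the paper's argument, and the ``main obstacle'' you anticipate does not arise. The one-step inequality (Lemma~\ref{lem:potential equation}) holds for the exact projection taken from an \emph{arbitrary} starting point $\bx(t-1)\in P_\delta$. You apply it with $\bx(t-1):=\bx^\epsilon(t-1)$ and pay only for the endpoint error $\|\bx^\epsilon(t)-\bx(t)\|$, through the Lipschitzness of $\|\cdot\|_w^+$ and of the potential. The KKT conditions never need to be revisited, and the nested-multiplier fallback is unnecessary.

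Finally, the output is not obtained by pushing internal mass to leaves. The anti-server point yields a leaf measure of mass $k+\frac12$ via $z_\ell=(1-x_{\ell1})/(1-\delta)$. This is converted to mass $k$ by the $\sigma$-transformation (as in Step~4 of the proof of Proposition~\ref{prop:frac-barely-frac}), at a constant-factor loss.
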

Note that this fractional algorithm is not competitive in the sense of Equation \eqref{eq:competitiveness}. The seemingly innocuous additive term in $\epsilon T$ is, in fact, quite inconvenient because a malicious adversary can make the additive term $\epsilon T$ very large, while keeping the offline cost bounded. Specifically, consider the case of a $(k+1)$-point metric with $k$ points close to one another, and one far-away point, and assume only the $k$ nearby points are requested. A natural fractional algorithm (and indeed the algorithm in \cite{BuchbinderGMN19}) would send only a very small (smaller than $\epsilon$) fractional mass from the far-away point to the group of $k$ points, per request. Therefore, a polynomial-time implementation that approximates fractional mass to within precision $\epsilon$ in each step would never move any mass from the far-away point to the group of $k$ points, and thus, would not have a finite competitive ratio. The other crucial issue with this fractional algorithm is that, under the rounding of \cite{bansal2015polylogarithmic}, it induces a randomized algorithm with exponential support, thereby losing its computational efficiency. We deal with both issues by discretizing the fractional algorithm (Proposition \ref{prop:frac-barely-frac}) before applying the rounding of \cite{bansal2015polylogarithmic} (Proposition \ref{prop:barely-frac-rand}). For further generality, the statements of the propositions are extended in Proposition~\ref{prop:frac-barely-frac} to arbitrary trees, and, in Proposition~\ref {prop:barely-frac-rand}, to all metric spaces for which a rounding scheme is known. The combination of Proposition~\ref{prop:frac-barely-frac} and Proposition \ref{prop:barely-frac-rand} entails our derandomization scheme.

\begin{restatable}[Fractional to barely fractional]{proposition}{barelyfractional}
\label{prop:frac-barely-frac}
Let $m\geq 2k^2+k$. Any fractional online algorithm for $k$-server on trees can be converted online into an $m$-barely fractional algorithm with at most a polynomial additive overhead in running time and a constant factor multiplicative overhead in cost.
\end{restatable}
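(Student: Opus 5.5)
We will build the $m$-barely fractional algorithm $\by(\cdot)$ online from the given fractional algorithm $\bz(\cdot)$ in two stages, following the \emph{hysteresis} idea announced in Section~\ref{sec:failure}. First, we pass to inner measures. We let mass rest at internal nodes, and serve a request at leaf $\rho(t)$ by pulling mass down from the nearest ancestors that hold mass. Second, we keep an $m$-barely fractional inner measure $\by(t)$ that tracks $\bz(t)$ only loosely. For each non-root node $u$ we keep the invariant
\[
|y_u(t)-z_u(t)|\le \delta \quad\text{with}\quad \delta=\Theta(1/k),
\]
and we update $y_u$ only when this invariant would be violated. When that happens, we move $y_u$ by a multiple of $1/m$ toward $z_u$, and by enough that a slack of order $\delta$ is restored. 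The condition $m\ge 2k^2+k$ makes the grid $\frac1m\Nbb$ much finer than $\delta$. This leaves room for the at most $k$ rounding errors of order $1/m$ that pile up along the children of a node, so that feasibility can be kept. The required inequalities are $y_u\ge\sum_{c\in C_u}y_c$, $y_r=k$, and $y_{\rho(t)}=1$, which we get by forcing $y_{\rho(t)}$ up to $1$ along the requested path using mass from ancestors. The final output is a leaf measure. We obtain it by pushing the mass sitting at internal nodes down to leaves, and we do this lazily, so that on an HST the extra cost is at most a constant factor of the inner movement.

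The cost analysis is an amortization through a potential. For each edge $(u,p(u))$ we set
\[
\Phi=\sum_u w_u\,\phi(y_u-z_u),
\]
where $\phi$ is a convex penalty that is zero on $[-\delta/2,\delta/2]$ and grows linearly with slope $O(1)$ outside it. A change of $\bz$ raises $\Phi$ by at most $O(1)\cdot \OT(\bz(t),\bz(t+1))$. A hysteresis jump of $y_u$ costs $w_u\cdot\Theta(\delta)$. Because the jump happens only after $|y_u-z_u|$ has drifted by order $\delta$, and lands back deep inside the slack region, it releases $\Omega(w_u\delta)$ of potential. Summing over time gives
\[
\Cost(\by)\le O(1)\cdot\Cost(\bz)+O(\Phi(0)),
\]
which is the claimed constant-factor overhead with an additive constant. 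The running time per request is polynomial, since only nodes whose constraints changed are checked and each check takes $O(n)$ work.

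The main obstacle is feasibility at nodes with many children. In the example of Section~\ref{sec:failure}, the children of $u$ drift by small amounts that together make a large drift at $u$. Independent per-node hysteresis can then break $y_u\ge\sum_c y_c$, or make $y$ jump at $u$ with nothing available below or above to move. Our first attempt is to run the hysteresis on the upward flows rather than on the $z_u$ themselves. We track the cumulative net flow across each edge and round it with a threshold rule of the form "move $1/m$ units once the backlog exceeds $\delta$". We then check that the rounding errors of the children of $u$ add up to at most $k\cdot(\delta+1/m)$, and we charge any mismatch to mass temporarily parked at $u$. The constraint $m\ge 2k^2+k$ should come out of exactly this accounting, since a node holds at most $k$ units of mass and each unit can contribute a rounding error of order $1/k$ of the grid. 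Once this local lemma holds, the amortization above should go through edge by edge.
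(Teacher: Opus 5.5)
\textbf{There is a genuine gap.} The whole argument rests on the ``local lemma'' of your last paragraph: that the rounding errors of the children of $u$ sum to at most $k(\delta+\frac1m)$ and can be absorbed by mass parked at $u$. You do not prove it, and it is false as stated. Rounding errors arise per edge, not per unit of mass, and the number of children carrying mass is unbounded. For example, let $\bz$ and $\by$ both put mass $\frac1m$ on each of $km$ children of $u$, and let $\bz$ lift half of each to $p(u)$. Every child backlog is $\frac{1}{2m}<\delta$, so nothing reaches $u$ in the rounded process. Yet the backlog on $(u,p(u))$ is $\frac k2$, which for large $k$ far exceeds $k(\delta+\frac1m)=O(1)$, and there is no parked mass at $u$. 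This is exactly the obstruction of Section~\ref{sec:failure}, and your plan leaves it unresolved.

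The other steps have the same defect:
\begin{itemize}
\item The per-node invariant $|y_u-z_u|\le\delta$ does not imply $y_u\ge\sum_{c\in C_u}y_c$. For a leaf measure $\bz$ it only gives $y_{[u]}\ge-(|C_u|+1)\delta$.
\item A jump of $y_u$ alone is possible only if $u$ or $p(u)$ holds internal mass. Otherwise it shifts $y_v$ at other nodes, so the claimed release of $\Omega(w_u\delta)$ potential is unjustified.
\item Forcing $y_{\rho(t)}$ up to $1$ can cost $\delta$ times a root-to-leaf distance per request. This cost is not accounted for in $\Phi$, and the step presupposes ancestor mass that nothing guarantees.
\end{itemize}

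\textbf{What the paper does instead.} Its proof rests on three ideas your plan lacks.
\begin{enumerate}
\item It first replaces the input algorithm (your $\bz$, the paper's $\bx$) by $\sigma(\bx)$, with $\sigma(x)=\floor{x}+2(x-\floor{x}-\frac12)^+$ applied at every node. This costs a factor $2$ and keeps mass $\ge1$ wherever $x_u\ge1$. It also guarantees that every node carrying mass has $x$-mass at least $\frac12$, so at most $2k$ pairwise incomparable nodes carry mass. In your example, $\sigma$ maps each child's mass $\frac1m$ to $0$.
\item The discretization is global rather than per edge. $\bz^{(2)}(t)$ minimizes $\OT(\bz^{(2)}(t-1),\bz)+\OT(\bz,\bz^{(1)}(t))$ over all $m'$-barely fractional $\bz\in\Zcal$, with ties broken toward maximal movement. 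Feasibility is then automatic, and the cost bound is just the triangle inequality plus telescoping, with no potential needed. The tie-breaking rule together with the $2k$ bound gives the uniform deficit $z^{(2)}_u\ge z^{(1)}_u-\frac{2k+1}{m'}$.
\item Requests are served not by local forcing but by scaling the whole measure by $\lambda=(1-\frac{2k+1}{m'})^{-1}$ and applying $\sigma$ once more. This restores mass $\ge1$ wherever $x_u\ge1$, returns to total mass exactly $k$ on the grid $\frac1m\Nbb$, and costs a constant factor. The hypothesis $m\ge 2k^2+k$ is exactly what makes $\lambda k\le k+\frac12$.
\end{enumerate}
Finally, deferring movements to obtain a leaf measure works on any tree by the triangle inequality. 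You should not restrict that step to HSTs, since the statement concerns general trees.
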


\begin{restatable}[Barely fractional to barely randomized]{proposition}{barelyrandom}
\label{prop:barely-frac-rand}
    For $k$-server on a line metric 
    or $\tau$-HSTs with $\tau> 5$, any $m$-barely fractional algorithm can be converted online into an $m$-barely random algorithm with at most a polynomial additive overhead in running time and a constant factor multiplicative overhead in cost.
\end{restatable}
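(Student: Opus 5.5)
The plan is to reuse the existing online rounding schemes, namely \cite{bansal2015polylogarithmic} for $\tau$-HSTs with $\tau>5$ and \cite{dehghani2017stochastic} for the line, and to show that when their input is $m$-barely fractional they can be run on a uniform distribution over $m$ deterministic configurations. Concretely, we maintain $m$ configurations $\Ccal_1(t),\dots,\Ccal_m(t)$ with the \emph{consistency} invariant that $\frac1m\sum_j \1[\text{server of }\Ccal_j(t)\text{ in subtree }u]$ counts agree with $z_u(t)$. For HSTs this means $|\Ccal_j(t)\cap L_u|\in\{\floor{z_u(t)},\ceil{z_u(t)}\}$ for every node $u$, and the fraction of $j$ taking the ceiling equals $z_u(t)-\floor{z_u(t)}$. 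Since $z_u(t)\in\frac1m\Nbb$, this fraction is a multiple of $1/m$, so the requirement can be met by an integer count of the $m$ configurations. This is the key point: the distribution used in the rounding of \cite{bansal2015polylogarithmic} can be taken to be supported on $m$ equally weighted atoms.

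First, I would decompose each step $\bz(t)\to\bz(t+1)$ into elementary moves. Each elementary move transfers mass $1/m$ along a single edge between sibling subtrees, or more precisely between a leaf and another leaf through their lowest common ancestor. Because the optimal transport on a tree is realized by such unit moves, the number of moves is at most $m\cdot\OT(\bz(t),\bz(t+1))/w_{\min}$ and is polynomial per step, given that $\OT$ is integral in units $1/m$ of edge weights. For each elementary move, we update exactly those configurations needed to restore the invariant. Moving $1/m$ mass out of subtree $A$ and into subtree $B$ changes each affected $z_u$ by $\pm1/m$. For each such $u$ we must flip exactly one configuration between floor and ceiling, except when crossing an integer, where we use a different rule. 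I would select this single configuration $j$ and move one of its servers from $A$ to $B$. The task is to choose $j$ so that the flips at all ancestors are simultaneously consistent.

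This choice of $j$ is where the structure of \cite{bansal2015polylogarithmic} enters. Their rounding maintains a distribution where, for each node, the children's floor/ceiling patterns are coupled via a dependent-rounding scheme. Rather than re-derive it, I would argue black-box: their online rounding, given a fractional update of total transport $\delta$, produces a coupled update of the distribution with expected movement $O(\delta)$ for $\tau>5$. Moreover, all probabilities it manipulates are integer combinations of the $z_u$ and of differences between them. Hence if every $z_u\in\frac1m\Nbb$ at all times, every probability in its distribution lies in $\frac1m\Nbb$, so the distribution is a uniform mixture of $m$ deterministic configurations, possibly with repeats. The line case follows the same pattern, since the rounding in \cite{dehghani2017stochastic} uses a single uniform threshold $\theta\in[0,1)$ and outputs servers at the quantiles $\theta+i$ of the cumulative mass. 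With mass in $\frac1m\Nbb$, only $\theta\in\{0,\frac1m,\dots\}$ matters, which gives exactly $m$ configurations with expected cost equal to the fractional cost.

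Finally, the cost bound follows because the uniform average of the $m$ movements equals the expected movement of the original rounding, which is $O(\OT)$, giving the constant-factor overhead. For running time, each elementary move touches $O(\text{depth})$ nodes and one or a few configurations, so the per-request work is polynomial given $m=\Ocal(k^2)$. The main obstacle I expect is the HST claim that the rounding of \cite{bansal2015polylogarithmic} keeps all probabilities in $\frac1m\Nbb$. Their scheme is formulated with continuous updates. If that fails, I would instead directly define the greedy per-move choice of $j$ above, picking, among configurations at ceiling in $A$ and floor in $B$ at the top level, one that also respects the lower levels. I would then prove via a Hall-type counting argument, using integrality of all counts, that such a $j$ always exists, and redo their charging argument, which needs $\tau>5$, for these discrete moves.
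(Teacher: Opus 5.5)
Your line argument is correct and is the same as the paper's: on $m$-barely fractional inputs the shared-threshold rounding is constant for $\theta\in[\frac{i-1}{m},\frac im)$, which gives the $m$ configurations built from $\floor{z_u+\frac{i-1}{m}}$. Your high-level HST plan is also the paper's: run the rounding of \cite{bansal2015polylogarithmic} on $m$ equally weighted configurations, fed with leaf-to-leaf moves of mass $\frac1m$. (Count these moves as at most $mk$ per step, since at most mass $k$ is transported; your bound $m\,\OT/w_{\min}$ is only pseudo-polynomial in $D$.)

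The gap is in the one step that carries the HST case: why the distribution remains a uniform mixture of $m$ configurations. Your reason, that the probabilities are integer combinations of the $z_u$, does not hold. Consistency and balance only fix the marginal law of each $n_v(\Ccal)$; the weights of individual configurations depend on the history. For example, on a star with leaves $c_1,\dots,c_4$ of mass $\frac12$ and $k=2$, the mixture $\alpha(\{c_1,c_2\}+\{c_3,c_4\})+(\frac12-\alpha)(\{c_1,c_3\}+\{c_2,c_4\})$ is consistent and balanced for every $\alpha$. So the invariant must be verified operation by operation.

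Your fallback rests on a false claim: in general no single configuration $j$ can absorb an elementary move while staying balanced. Take $m=k=2$, a root $u$ with children $v,v'$, leaves $c_1,c_2,c_3$ under $v$ of mass $\frac12$ each, a leaf $c_1'$ under $v'$ of mass $\frac12$, and $\Ccal_1=\{c_1,c_2\}$, $\Ccal_2=\{c_3,c_1'\}$. Moving $\frac12$ from $c_3$ to another leaf $c_2'$ under $v'$ forces $n_v=1$ in both configurations and $c_3$ in neither. Hence both $\Ccal_1$ and $\Ccal_2$ must change, and no Hall-type counting can produce a single $j$.

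The missing ingredient is the multi-configuration update in the paper's lemma. First, restore consistency in one of two ways:
\begin{itemize}
\item move the server at $\ell$ to $\ell'$ in a configuration that contains $\ell$ but not $\ell'$; or,
\item if no such configuration exists, swap between two configurations $i,j$ through a leaf $\ell''$ below the lowest common ancestor, which exists by balancedness.
\end{itemize}
Then rebalance the descendants of that ancestor top-down by exchanging servers between an over-full and an under-full configuration. Each exchange strictly decreases the balance gap, which gives polynomial time, and the HST separation bounds the cost of this cascade as in \cite{bansal2015polylogarithmic}. Because every operation acts on whole configurations, the uniform-over-$m$ structure is preserved automatically. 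This is what should replace your black-box claim.
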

It is clear from the statement of these propositions why they lead to a polynomial time randomized algorithm (provided that $1/\epsilon$ and $m$ are sufficiently `small'), but it is not immediately obvious why they also solve the issue that was initially raised by the additive term. For this, we note that we can pick $\epsilon = \Theta\left(\frac{1}{m}\right)$, and then directly apply the following result. 

\begin{restatable}{proposition}{propAdditive}\label{prop:additive}
    Suppose that there is an $m$-barely fractional algorithm $\bz(\cdot)$ satisfying for any request sequence $\rho(\cdot)$,
    $$\Cost(\bz(\cdot), \rho(\cdot))\leq c\, \OPT(\rho(\cdot)) + \frac{1}{2m} T + c_0,$$
    where $T$ is the number of time-steps in the input sequence $\rho(\cdot)$, $c_0$ is a constant independent of $\rho(\cdot)$, and each non-zero distance is at least $1$. Then the variant of $\bz(\cdot)$ that ignores any request that it can serve without moving is $2c$-competitive. 
\end{restatable}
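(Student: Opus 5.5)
The plan is to apply the hypothesis to the \emph{filtered} request sequence. Let $\rho'(\cdot)$ denote the subsequence of requests that the variant actually passes to $\bz(\cdot)$, i.e.\ those $\rho(t)$ with $z_{\rho(t)}(t-1) < 1$ at the moment they arrive. The variant, run on $\rho(\cdot)$, produces exactly the trajectory of $\bz(\cdot)$ run on $\rho'(\cdot)$; ignored requests cost nothing. Hence its cost equals $\Cost(\bz(\cdot),\rho'(\cdot))$. The variant is online, since whether a request is ignored depends only on the current configuration. Also $\OPT(\rho'(\cdot))\le \OPT(\rho(\cdot))$, because $\rho'$ is a subsequence and any offline solution for $\rho$ restricted to the times of $\rho'$ serves $\rho'$ at no greater cost (triangle inequality for the matching distance). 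Applying the hypothesis to $\rho'$, with $T'$ its length, gives
$$\Cost \le c\,\OPT(\rho(\cdot)) + \tfrac{1}{2m}T' + c_0 .$$

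It remains to absorb $\frac{1}{2m}T'$ into the cost itself. Every non-ignored request $\rho'(s)$ arrives when $z_{\rho'(s)}(s-1)<1$ and must be served with $z_{\rho'(s)}(s)=1$. Since the measures are $m$-barely fractional, at least $1/m$ mass must move onto the leaf $\rho'(s)$. That mass comes from other leaves, each at distance at least $1$ by the assumption on nonzero distances. (Zero-distance duplicates can be merged beforehand, or the integrality assumption used.) Therefore $\OT(\bz(s),\bz(s-1))\ge 1/m$, using that on a tree $\OT$ between leaf measures is the earth mover distance. Summing over steps gives $T'/m \le \Cost$, so $\frac{1}{2m}T'\le \frac12 \Cost$.

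Substituting, $\Cost \le c\,\OPT + \frac12\Cost + c_0$, hence $\Cost\le 2c\,\OPT + 2c_0$, which is $2c$-competitiveness.

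The only delicate point is the lower bound $\OT\ge 1/m$ per non-ignored step. This needs both the granularity $1/m$ and the fact that moving mass between distinct leaves costs at least $1$. In the tree formulation, the mass $z_{\rho}$ increases by at least $1/m$, so the edge from $\rho$ to its parent contributes $w_\rho\cdot\frac1m$. If $w_\rho$ could be small, I would instead argue via the leaf-to-leaf transport distance, which is at least the minimum inter-leaf distance times the moved mass, i.e.\ at least $1/m$. Finiteness of $\Cost$ is needed for the rearrangement; for a finite $\rho$ it is automatic.
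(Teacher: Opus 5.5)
Your proof is correct and follows essentially the same route as the paper: run $\bz(\cdot)$ on the filtered subsequence $\rho'(\cdot)$, use $\OPT(\rho'(\cdot))\le\OPT(\rho(\cdot))$, and absorb the $\frac{1}{2m}T'$ term using the lower bound $\Cost\ge T'/m$, which comes from $m$-barely fractionality and the unit minimum distance; the only difference is the trivial order of the final algebra. Your justification of that lower bound via the leaf-to-leaf transport distance, rather than the single leaf-edge weight $w_{\rho}$, spells out a step the paper asserts in one line.
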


We prove Propositions~\ref{prop:frac-barely-frac}~and~\ref{prop:additive} in Section \ref{sec:barely-frac}, Proposition \ref{prop:barely-frac-rand} in Section \ref{sec:barely-rand}, and Proposition~\ref{prop:approximate-smile} in Section~\ref{sec:ccomplexity}.

\section{\texorpdfstring{Barely fractional $k$-server}{Barely fractional k-server}}\label{sec:barely-frac}
In this section, we prove the main technical step of this paper of going from fractional to barely-fractional algorithms. Before presenting the proof, we start with a general discussion of how \textit{hysteresis} can be used in competitive analysis, motivating our approach with a simple example inspired by dynamic data structures.

\barelyfractional*

\begin{remark}[Collective interpretation of $m$-barely fractional algorithms] One way to interpret $m$-barely fractional $k$-server is as an online problem with $mk$ deterministic agents, where each incoming request must be served by moving $m$ agents to the requested point. 
This perspective connects to the metric allocation problem \cite{BansalC22}, which generalizes $k$-server by allowing a broader class of requests. It also relates to collective metrical task systems \cite{cosson2024barely}, where a team of $m$ agents jointly faces a metrical task system, and to collective tree exploration \cite{cosson2024collective}, where a team of $k$ agents must explore an unknown tree as quickly as possible.
\end{remark}

\paragraph{Warm-up: competitiveness via hysteresis.} Hysteresis is a phenomenon where a system's present state depends on its history. It is a common theme in physics, specifically in electromagnetism, where it describes the fact a magnet can resist a (mildly) opposing magnetic field. 

In computer science, the concept appears in dynamic data structures \cite{bender2023iceberg}, and in real-world implementations of resizable arrays (see e.g., \cite{berger1999hoard, schildermans2020ptlbmalloc2}). Specifically, consider the situation where a user wishes to store at time $t$ an array of size $y(t)\in \Nbb$. Since reallocating memory space is costly, a common technique is to allocate at any time $t$ an array of size $x(t) \geq y(t)$, where the value of $x(t)$ is more `stable' than the value of $y(t)$ (i.e., it resists change). A typical update rule for $x(t)$ is as follows:
\begin{equation}
    \begin{cases}
        \text{If } y(t)\ge x(t-1) \text{ then set } x(t) = 2x(t-1).\\
        \text{If } y(t)\le x(t-1)/4 \text{ then set } x(t) = x(t-1)/2.
    \end{cases}\label{eq:basic-trick}
\end{equation}
Basic arguments involving geometric sums demonstrate the approach's competitiveness. 
As a warm-up to our proof of Proposition \ref{prop:frac-barely-frac}, we propose a new perspective on the update rule described in Equation \eqref{eq:basic-trick}, which can equivalently be formulated as 
\begin{equation}
    x(t) = \argmin_{x\in \{2^i\colon i\in \Nbb\}} |x-x(t-1)|+|x-2 y(t)|,\label{eq:basic-hysteresis}
\end{equation}
where ties are broken in favor of the solution maximizing $|x-x(t-1)|$. The idea is that the first term $|x-x(t-1)|$ resists changes in the value of $x$, whereas the other term $|x-2 y(t)|$ encourages $x$ to take a value close to $2y$. We now show the competitiveness of Equation \eqref{eq:basic-hysteresis} without using geometric sums.  By optimality of $x(t)$ in \eqref{eq:basic-hysteresis} we have that
\begin{equation}\label{eq:basic-hysteresis-argument}
    |x(t)-x(t-1)| + |x(t)-2y(t)|= |x(t-1)-2y(t)|. 
\end{equation}
Subtracting $|x(t-1)-2y(t-1)|$ from both sides of Equation \eqref{eq:basic-hysteresis-argument}, and using on the right-hand side that $|x(t-1)-2y(t)|-|x(t-1)-2y(t-1)|\leq 2|y(t-1)-y(t)|$, and on the left-hand side, the telescopic sum $\sum_{t\in [T]}|x(t)-2y(t)|-|x(t-1)-2y(t-1)| = |x(T)-2y(T)|-|x(0)-2y(0)|\geq 0$, we get,
\begin{equation*}
    \sum_{t\in [T]}|x(t)-x(t-1)|  \leq 2\sum_{t\in [T]}|y(t)-y(t-1)|,
\end{equation*}
and thus, the competitiveness of $\bx(\cdot)$.
This analysis can be extrapolated to more complex settings, as we will see in the proof below. Specifically, the reader will notice the parallel between Equation~\eqref{eq:basic-hysteresis} and our discretization technique for $k$-server in Equation \eqref{eq:z2def}, where the absolute value is replaced with an optimal transport cost, and where the set $\{2^i\colon i\in \Nbb\}$ is replaced by a set of barely fractional configurations.

\begin{proof}[Proof of Proposition \ref{prop:frac-barely-frac}] Let $\bx(\cdot)$ be the given fractional algorithm, and assume without loss of generality that it is initialized with an integral configuration $\bx(0)$. The proof goes through a sequence of steps that progressively go from the initial fully fractional strategy $\bx(\cdot)$ to the desired barely fractional strategy $\by(\cdot)$:
\begin{enumerate}
    \item First, we define a sequence of inner measures $\bz^{(1)}(\cdot)$ of total mass $k$, with a movement cost at most twice the movement cost of $\bx(\cdot)$, and satisfying at all times and for all nodes $u\in V\colon z_{u}^{(1)}>0 \implies x_u\geq 1/2$ as well as $x_u\geq 1\implies z_u^{(1)}\geq 1$, where we drop the dependency on $t$ for convenience. 
    \item Next, for $m' = 2m+2k+1$, we define a sequence of $m'$-barely fractional inner measures $\bz^{(2)}(\cdot)$ of total mass $k$, of movement cost at most the movement cost of $\bz^{(1)}(\cdot)$, and satisfying at all times and for all nodes $u\in V\colon z^{(2)}_u \geq z^{(1)}_u -(2k+1)/m'$.
    \item Then, we define a sequence of $2m$-barely fractional inner measures $\bz^{(3)}(\cdot)$ of total mass in the interval $[k,k+1/2]$ (the same mass in each step), of movement cost at most twice the movement cost of $\bz^{(2)}(\cdot)$ and satisfying at all times, and for all nodes $u\in V\colon x_u\geq 1 \implies z_u^{(3)}\geq 1$.
    \item After, we define a sequence of $m$-barely fractional inner measures $\bz^{(4)}(\cdot)$ of total mass $k$, of movement cost at most twice the movement cost of $\bz^{(3)}(\cdot)$, and satisfying at all times and for all nodes $u\in V\colon x_u\geq 1 \implies z_u^{(4)}\geq 1$.\label{it:kPlusHalfToK}
    \item Finally, observing that all prior steps can be performed online, we obtain our desired $m$-barely fractional online algorithm by converting the sequence of inner measures $\bz^{(4)}(\cdot)$ into a sequence of leaf measures. 
\end{enumerate}

\paragraph{Step 1.} We consider the function $\sigma \colon \Rbb_+ \rightarrow \Rbb_+$ such that $\sigma\vert_{\left[\ell,\ell+\frac{1}{2}\right]} = \ell$ for every $\ell\in \Zbb_+$ and $\sigma$ is extended affinely to the rest of $\Rbb_+$. Equivalently, $\forall x\in \Rbb_+\colon \sigma(x) = \floor{x}+2(x-\floor{x}-1/2)^+$. The definition is borrowed from \cite{BubeckCLLM18}. We consider the strategy $\bz^{(1)}(\cdot)$ defined at all times by $\bz^{(1)}(t) = \sigma(\bx(t))$ where $\sigma(\cdot)$ is applied element-wise (including to inner nodes of the tree). Note that $\bz^{(1)}(\cdot)$ always defines an inner measure of total mass $k$ because $\sigma(\cdot)$ is superadditive and $\forall t\colon z^{(1)}_r(t) = \sigma(k) = k$. Also, since $\sigma(\cdot)$ is $2$-Lipschitz, the movement cost of $\bz^{(1)}(\cdot)$ is at most twice the cost of $\bx(\cdot)$.  

\paragraph{Step 2.} Next, we consider the strategy $\bz^{(2)}(\cdot)$ defined by initializing $\bz^{(2)}(0)=\bz^{(1)}(0) =\bx(0)$ and with the induction for $t\geq 1$, 
\begin{equation}\label{eq:z2def}
    \bz^{(2)}(t) \in \argmin_{\substack{\bz \in \Zcal \\ \forall u\in V\colon z_u \in \frac{1}{m'}\Nbb}} \OT(\bz^{(2)}(t-1),\bz)+\OT(\bz,\bz^{(1)}(t)),
\end{equation}
where we split ties in favor of a configuration maximizing $\OT(\bz^{(2)}(t-1),\bz)$. Note that the old configuration $\bz=\bz^{(2)}(t-1)$ already minimizes \eqref{eq:z2def}, due to the triangle inequality. Starting from this $\bz$, in a polynomial number of steps we can successively modify $\bz$ until it satisfies the tie-breaking rule, as follows: While the tree contains an edge $(u,v)$ such that $z_{[u]}\ge\frac{1}{m'}$ and the subtree $V_u$ of nodes closer to $u$ than $v$ satisfies $\sum_{s\in V_u}z_{[s]}\ge \sum_{s\in V_u}z^{(1)}_{[s]} + \frac{1}{m'}$, we change $\bz$ by removing mass $\frac{1}{m'}$ from $u$ and adding it to $v$. This increases the first term in the optimization problem~\eqref{eq:z2def} by $\frac{1}{m'}$ times the length of the edge $(u,v)$ and decreases the second term by the same amount. Once such an edge $(u,v)$ no longer exists, we set $\bz^{(2)}(t):=\bz$. In other words, we gradually move $\bz$ away from $\bz^{(2)}(t-1)$ and towards $\bz^{(1)}(t)$ as long as another $m'$-barely fractional configuration in that direction exists. Polynomial running time of the procedure follows from the fact that each edge $(u,v)$ can be chosen at most $m'k$ times.

We first show that the movement cost of $\bz^{(2)}(\cdot)$ is at most the movement cost of $\bz^{(1)}(\cdot)$. We have by optimality of $\bz^{(2)}(t)$ in \eqref{eq:z2def} and triangle inequality that
\begin{equation*}
    \OT(\bz^{(2)}(t-1),\bz^{(2)}(t))+\OT(\bz^{(2)}(t),\bz^{(1)}(t))) = \OT(\bz^{(2)}(t-1),\bz^{(1)}(t)).
\end{equation*}
By the triangle inequality $\OT(\bz^{(2)}(t-1),\bz^{(1)}(t)) \leq \OT(\bz^{(2)}(t-1),\bz^{(1)}(t-1))+\OT(\bz^{(1)}(t-1),\bz^{(1)}(t))$, we then obtain the inequality
\begin{equation*}
    \OT(\bz^{(2)}(t-1),\bz^{(2)}(t))+\OT(\bz^{(2)}(t),\bz^{(1)}(t))-\OT(\bz^{(2)}(t-1),\bz^{(1)}(t-1)) \leq \OT(\bz^{(1)}(t-1),\bz^{(1)}(t)).
\end{equation*}
We therefore conclude that the movement cost of $\bz^{(2)}(\cdot)$ is less than the movement cost of $\bz^{(1)}(\cdot)$ by taking a telescopic sum in the above equation. 

Second, we prove that for all nodes $u\in V\colon z^{(2)}_u(t) \geq z^{(1)}_u(t)-(2k+1)/m'$. For this, we denote by $F_u$ the set of nodes $v\in V\setminus\{u\}$ satisfying $z^{(2)}_{[v]}>0$ and such that $\bz^{(2)}$ has no mass on the path from $v$ to $u$ (except at its endpoints). Denoting by $\pi(\cdot,\cdot)$ an optimal transport from $\bz^{(1)}$ to $\bz^{(2)}$, we have that 
\begin{align}
    z^{(2)}_u &= z^{(1)}_u + \sum_{v\in V}\pi(v,u) - \sum_{v\in V}\pi(u,v),\nonumber\\
    & \geq z^{(1)}_u- \sum_{v\in V\setminus\{u\}}\pi(u,v),\nonumber\\
    &= z^{(1)}_u- \sum_{v\in F_u}\left(\sum_{w\in V(v)} \pi(u,w)\right),\label{eq:F}\\
    &\geq z^{(1)}_u -\frac{|F_u|}{m'}.\label{eq:no-move}
\end{align}
where in \eqref{eq:F}, we denote by $V(v)$ the tree composed of nodes that are separated from $u$ by $v\in F_u$, and where for \eqref{eq:no-move} we use that $\forall v\in F_u\colon \sum_{w\in V(v)} \pi(u,w) \leq 1/m'$, otherwise, the tie-breaking rule of \eqref{eq:z2def} is contradicted by considering the configuration $\bz$ where, starting from $\bz^{(2)}$, we remove a mass of $1/m'$ from $v$ and add it to $u$ instead. Then, we observe that (a) at most one node of $F_u$ is an ancestor of $u$, and we denote it by $\bar u$, and (b) the set $F_u' = F_u\setminus \{\bar u \}$ does not contain a pair of nodes where one is an ancestor of the other. Since $\forall v\in F_u' \colon z^{(2)}_v >0$ we have that $z^{(1)}_v>0$ and thus that $x_v \geq 1/2$. Thus, all elements of $F_u'$ correspond to a mass of at least $1/2$ for $\bx$, which is of total mass $k$. This shows that $|F_u'|\leq 2k$, and thus that $|F_u|\leq 2k +1$.

\paragraph{Step 3.} We then consider the sequence defined at all times by $\bz^{(3)}(t) = \lambda \bz^{(2)}(t)$, for $\lambda = \frac{1}{1-(2k+1)/m'}$,   which defines $2m$-barely fractional measures of mass $\lambda k$. We note that the constraint $4k^2+4k+1 \leq m'$, i.e.,  $2k^2+k\leq m$, effectively enforces that $\lambda k\in [k,k+1/2]$. Clearly, since $\lambda\leq 2$, the movement cost of $\bz^{(3)}(\cdot)$ is at most twice the movement cost of $\bz^{(2)}(\cdot)$. Also, for any node $u\in V$ for which $x_u\geq 1$, we have that $z_u^{(2)}\geq z_u^{(1)}-\frac{2k+1}{m'}\geq 1-\frac{2k+1}{m'} = 1/\lambda$ and thus $z_u^{(3)} = \lambda z_u^{(2)} \geq 1$. 

\paragraph{Step 4.} We finally consider $\bz^{(4)}(\cdot)$ defined by $\forall t\colon \bz^{(4)}(t) = \sigma(\bz^{(3)}(t))$. This defines a valid sequence of inner measures because $\sigma(\cdot)$ is super-additive, with mass $k$ since we have that $z^{(4)}_r = \sigma\left(\lambda k\right) = k$. Also, we clearly have that whenever $x_u\geq 1$, $z^{(3)}_u\geq 1$ and thus, $z^{(4)}_u\geq 1$. Each measure $\bz^{(4)}(t)$ is $m$-barely fractional because $\bz^{(3)}(t)$ is $2m$-barely fractional and $\sigma\left(\frac{1}{2m}\Nbb\right)\subset \frac{1}{m}\Nbb$. By the same Lipschitzness argument as above, the movement cost of $\bz^{(4)}(\cdot)$ is at most twice the movement cost of $\bz^{(3)}(\cdot)$.

\paragraph{Step 5.} We observe that all the definitions above can be implemented online and in polynomial time. The algorithm $\bz^{(4)}(\cdot)$ satisfies almost all the desired properties: It is $m$-barely fractional, has a total mass $k$, serves the same requests as $\bx(\cdot)$, and its movement cost is at most $8$ times the movement cost of $\bx(\cdot)$. However, it may have fractional mass at inner nodes of the tree. To prevent this, any movement of server mass can be deferred until it reaches a leaf, since placing server mass at internal nodes is never necessary for serving requests.
\end{proof}

We then observe that barely fractional algorithms can handle a small (compared to $m$) additive overhead, as exposed in the following proposition.

\propAdditive*

\begin{proof}[Proof of Proposition \ref{prop:additive}.]
Let $\bz'(\cdot)$ be the variant of $\bz(\cdot)$ that ignores requests that are already satisfied. Concretely, we will have $\bz'(t) = \bz'(t-1)$ if $z'_{\rho(t)}(t-1) \geq 1$. We denote by $\rho'(\cdot)$ the subsequence of $\rho(\cdot)$ where we remove all such superfluous requests. The fractional algorithm $\bz'(\cdot)$ is formally defined by the value of algorithm $\bz(\cdot)$ on the request sequence $\rho'(\cdot)$. Denoting by $T'$ the length of $\rho'(\cdot)$, i.e., the number of requests that are not superfluous, we have
\begin{equation}
    \frac{1}{m}T'\leq \Cost(\bz'(\cdot),\rho(\cdot)) = \Cost(\bz(\cdot),\rho'(\cdot))\leq c \OPT(\rho'(\cdot)) + \frac{1}{2m}T' + c_0,
\end{equation}
where the first inequality comes from the fact that any movement entails a cost of at least $\frac{1}{m}$.  Thus,
\begin{equation}
    \frac{1}{2m}T' \leq c \OPT(\rho'(\cdot))+c_0.
\end{equation}
Therefore, observing that $\OPT(\rho'(\cdot))\leq \OPT(\rho(\cdot))$ because $\rho'(\cdot)$ is a subsequence of $\rho(\cdot)$, we have
\begin{equation}
    \Cost(\bz'(\cdot),\rho(\cdot)) \leq 2c \OPT(\rho(\cdot)) + 2c_0.
\end{equation}
Since superfluous requests can be eliminated in constant time, the computational complexity of $\bz'(\cdot)$ is no larger than the computational complexity of $\bz(\cdot)$.  
\end{proof}

\section{\texorpdfstring{Barely random $k$-server}{Barely random k-server}}\label{sec:barely-rand}

\barelyrandom*

\begin{proof}
We first prove the result for the line metric $\Mcal  = ([n],d(\cdot,\cdot))$, with $\forall u,v\in [n]^2\colon d(u,v) = |u-v|$.
We view this as a tree rooted at $1\in[n]$, and for convenience, other than our usual convention, we consider internal nodes part of the metric space. (Recall that both models are equivalent by adding leaves at distance $0$.)
For a configuration $\bz$, we have that $z_u$ denotes the mass that is located `below' $u\in[n]$, whereas $z_{[u]} = z_u-z_{u+1}$ is the mass located at $u$ (with the convention that $z_{n+1}=0$). Then, given an $m$-barely fractional algorithm $\bz(\cdot)$ for any $i\in \{1,\dots, m\}$,  at any time $t$, we define the deterministic $k$-server configuration $\Rcal_i(t)$ as
\begin{equation}
    \Rcal_i(t) = \left( \max \{u\in [n] \colon \left\lfloor z_u(t)+\frac{i-1}{m}\right\rfloor =h\}\right)_{h\in \{1,\dots, k\}}.
\end{equation}
We observe that $\Rcal_i(t)$ contains always exactly $k$ points of $[n]$, because $z_1(t) = k$, and $z_{n+1}(t)=0$. Also, we note that the randomized algorithm $\Rcal(\cdot)$ effectively serves the incoming requests, i.e., $\forall i \in [m] \colon u\in \Rcal_i(t)$. This is because, when the request is at $u=\rho(t)$, we must have $z_{[u]}(t)=1$ and thus $z_u(t) = z_{u+1}(t)+1$. Finally, we study the movement cost of $\Rcal(\cdot)$. We note that when a unit mass of $\frac{1}{m}$ moves in $\bz(t)$ along one edge, at most one configuration $\Rcal_i(t)$ is affected, moving by a distance of at most one. This shows that the movement cost of $\Rcal(\cdot)$ is less than the movement cost of $\bz(\cdot)$, and it finishes the proof.

We then turn to the proof in the case where the tree is an HST. Adapting the proof of \cite{bansal2015polylogarithmic} is straightforward, and we only describe the main ideas. We will make sure that the $m$-barely random configuration $\Rcal(\cdot) = (\Rcal_1(\cdot),\dots, \Rcal_m(\cdot))$ is \textit{consistent} with $\bz(\cdot)$ at all times, i.e., for all $t>0$ and $u\in V$,
\begin{equation*}
    z_u(t) = \frac{1}{m}\sum_{i\in [m]}\sum_{v\in L_u} \oneb(v\in \Rcal_i(t)).
\end{equation*}
We further say that a configuration $\Ccal$ is \emph{balanced} with respect to $\bz$ if it satisfies
$\forall u \in V\colon n_u(\Ccal) = \sum_{v\in L_u} \oneb(v\in \Ccal) \in \{\floor{z_u},\ceil{z_u}\}$, where $n_u(\Ccal)$ counts the number of servers below $u$ in $\Ccal$.  The result is obtained by induction by applying the following lemma. 
\begin{lemma}\label{lem: bansal}
Consider two $m$-barely fractional configurations $\bz$ and $\bz'$ and some $m$-barely random configuration $\Rcal = (\Rcal_1,\dots, \Rcal_m)$ that is consistent and balanced with respect to $\bz$. Then one can find in polynomial time an $m$-barely random configuration $\Rcal' =(\Rcal_1',\dots, \Rcal_m')$ that is consistent and balanced with respect to $\bz'$ and such that $\frac{1}{m}\sum_{i\in [m]}d(\Rcal_i,\Rcal_i') =\Ocal( {\normalfont\OT}(\bz,\bz'))$. 
\end{lemma}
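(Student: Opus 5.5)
Since $\OT$ on a tree is additive over edges, I would decompose the change from $\bz$ to $\bz'$ into elementary moves. Each elementary move transfers one unit of mass $\frac{1}{m}$ along a single edge $(u,p(u))$, so it changes exactly one coordinate $z_u$ by $\pm\frac1m$ and costs $w_u/m$. Concretely, I would take an optimal transport plan between the two leaf measures; it consists of a polynomial number (at most $mk$ per edge) of unit moves of $\frac1m$ along root-to-leaf paths. Each such move is processed as a sequence of single-edge steps: first climb from the source leaf to the least common ancestor, then descend to the target leaf. The intermediate states are inner measures, but only transiently, so it is cleaner to treat a whole leaf-to-leaf unit move $v\to v'$ at once. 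The goal is then to show that a single unit move of mass $\frac1m$ from leaf $v$ to leaf $v'$ can be simulated by modifying the configurations $\Rcal_i$ while preserving consistency and balance, at cost $O(d(v,v'))$ summed over $i$ (not divided by $m$). Because the unit move's transport cost is $d(v,v')/m$, this yields the bound $O(\OT(\bz,\bz'))$ by summation.

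For one unit move, let $a$ be the least common ancestor of $v$ and $v'$. Along the path $v=u_0,u_1,\dots,u_j=a$, the values $z_{u_\ell}$ for $\ell<j$ decrease by $\frac1m$. Consistency then forces that for each such $u_\ell$ exactly one configuration loses a server below $u_\ell$. Symmetrically, on the path $v'=u'_0,\dots,u'_{j'}$ below $a$, exactly one configuration gains a server below each node. The natural plan, following \cite{bansal2015polylogarithmic}, is to pick an index $i$ with a server at $v$, move it toward $a$, and then down toward $v'$. Balance may fail at a node $u_\ell$ when $z_{u_\ell}$ crosses an integer and the chosen $\Rcal_i$ has $n_{u_\ell}=\lceil\cdot\rceil$ (or $\lfloor\cdot\rfloor$) on the wrong side. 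The fix is to choose, at each level, which configuration actually loses the server below $u_\ell$. If $\Rcal_i$ would become unbalanced at $u_\ell$, we instead swap in another configuration $\Rcal_{i'}$ whose count $n_{u_\ell}$ is the ceiling: we move a server of $\Rcal_{i'}$ from below $u_\ell$ to some sibling subtree below $u_{\ell+1}$ where $\Rcal_{i'}$ is deficient, and we undo $\Rcal_i$'s partial move. Counting arguments show such an $i'$ and such a deficient sibling exist, since the averages of $n_u(\Rcal_i)$ over $i$ equal $z_u$ by consistency and all values lie in $\{\lfloor z_u\rfloor,\lceil z_u\rceil\}$.

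The main obstacle is the cost accounting for these repairs. A repair at level $\ell$ moves a server within the subtree of $u_{\ell+1}$, costing $O(w_{u_\ell})$ since the HST has geometrically decreasing weights with $\tau>5$. However, a repair may itself trigger imbalance at lower levels inside the sibling subtree, causing a cascade. I would handle the cascade by recursion downward: each repair at a node of weight $w$ causes at most one repair per level below it, at costs $w/\tau, w/\tau^2,\dots$. This geometric sum is at most $w\cdot\frac{\tau}{\tau-1}$. The upward path contributes $\sum_\ell w_{u_\ell}=O(d(v,a))$, and likewise the downward path. The condition $\tau>5$ would be used to absorb the constant blowup from having at most two repair cascades per level (one for the ancestor path and one for the descendant path), giving total $O(d(v,v'))$ per unit move. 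Polynomial time follows because there are $O(mk\,n)$ unit moves, each handled by $O(\text{depth}\cdot m)$ local operations.
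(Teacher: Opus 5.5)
\textbf{There is a genuine gap in the repair step.} Your reduction to leaf-to-leaf unit moves of mass $\frac1m$ matches the paper, but the step that handles a single unit move does not work as written. When $\Rcal_i$ is at the floor at $u_\ell$, you let $\Rcal_{i'}$ move a server from below $u_\ell$ into a sibling subtree $s$ where it is deficient, and you undo $\Rcal_i$'s partial move. Both halves violate consistency.

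First, undoing $\Rcal_i$'s move restores its counts at $u_0,\dots,u_{\ell-1}$. Those nodes then lose the decrement that $\bz'$ requires, unless $\Rcal_{i'}$'s server comes from $v$ itself. That is impossible in general: if $z_v=\frac1m$, then $\Rcal_i$ is the only configuration with a server at $v$, and it must give that server up.

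Second, $z_s$ does not change, so adding a server of $\Rcal_{i'}$ below $s$ raises $\sum_i n_s(\Rcal_i)$ with nothing compensating it. Your downward cascade tracks balance only, not this. Relatedly, consistency forces only a net loss of one server below $u_\ell$, not that exactly one configuration loses one.

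Several further steps are asserted rather than derived:
\begin{itemize}
\item the existence of $i'$ and $s$;
\item the bound of one repair per level;
\item the step at the lca, where the configuration leaving $u_{j-1}$ may already be at the ceiling in the child containing $v'$;
\item the use of $\tau>5$.
\end{itemize}

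\textbf{The missing idea.} Take the exchange the paper performs once at the lca via the leaf $\ell''$, and apply it at every level as a relay. Pick $\Rcal_{i'}$ at the ceiling at $u_\ell$. $\Rcal_i$ deposits its in-flight server at a leaf $\ell''$ that $\Rcal_{i'}$ occupies and $\Rcal_i$ does not. $\Rcal_{i'}$ then removes its server at $\ell''$ and carries it onward.

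Choose $\ell''$ at the end of a path on which $n_w(\Rcal_{i'})-n_w(\Rcal_i)=1$ at every node $w$. Such a path exists. At $u_\ell$, not counting the in-flight server, the children's differences sum to $2$. Each difference lies in $\{-1,0,1\}$, because both configurations are balanced (with respect to $\bz'$ at $u_{\ell-1}$ and $\bz$ elsewhere). Below any node with difference $1$, the children's differences sum to $1$, so you can always descend to a leaf. Along this path the two configurations merely trade floor and ceiling, so consistency and balance are preserved exactly and nothing cascades.

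The same count handles the lca: there the children's differences sum to a value in $\{0,1,2\}$, and the child containing $v'$ contributes $-1$. It also handles each step of the descent to $v'$.

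Each level then costs at most one server move inside that level's subtree. The total is therefore bounded by the sum of subtree diameters along the $v$--$v'$ path, which is $O(d(v,v'))$ in a $\tau$-HST for any fixed $\tau>1$.

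Repaired this way, your route genuinely differs from the paper's. The paper first restores consistency using two configurations and then rebalances top-down following Bansal et al. Your interleaved relay removes that separate rebalancing phase.
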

\begin{proof}[Proof sketch.]
Observe that it suffices to consider the elementary case where: 
\begin{equation*}
    z_{\ell'}' \gets z_{\ell'} +1/m\quad \text{and} \quad
    z_\ell' \gets z_\ell -1/m,
\end{equation*}
for two leaves $\ell,\ell'\in L$. 
We first construct $\bar \Rcal$ that is consistent with $\bz'$, but imbalanced.
If there exists $i\in [m]$ such that $\ell\in \Rcal_i$ and $\ell'\not\in \Rcal_i$, we simply transfer the server at $\ell$ in $\Rcal_i$ to $\ell'$. Otherwise, since $z_\ell>0$ and $z_{\ell'}<1$, we can pick $i\neq j\in [m]$ such that $\ell,\ell' \in\Rcal_i$ and $\ell,\ell' \not\in \Rcal_j$. We denote by $u$ the lowest common ancestor of $\ell$ and $\ell'$. Because both configurations are balanced with respect to $\bz$, their numbers of servers below $u$ differ by at most $1$, and we can pick a leaf $\ell''$ that is a descendant of $u$ such that $\ell''\in \Rcal_j$ and $\ell''\not\in\Rcal_i$. We then proceed to the following updates,  
\begin{equation*}
    \bar \Rcal_i \gets \Rcal_i \cup \{\ell''\} \setminus \{\ell\} \quad \text{and}\quad  
    \bar \Rcal_j \gets \Rcal_j \cup \{\ell'\} \setminus \{\ell''\}.
\end{equation*}
The total movement cost for this operation is at most $2\OT(\bz,\bz')$, and the configuration $\bar \Rcal$ obtained is consistent with $\bz'$. However, the configuration $\bar \Rcal$ might not be balanced with respect to $\bz'$, specifically, descendants of $u$ might have become imbalanced after this transformation, where $u$ is the lowest common ancestor of $\ell$ and $\ell'$. We therefore need to transform $\bar \Rcal$ into a configuration $\Rcal'$ that is balanced, while incurring a limited movement cost. For this, the idea is to balance all descendants $v$ of $u$, proceeding inductively from top to bottom. Concretely, this is achieved by exchanging points between pairs of configurations $(\Rcal_{i},\Rcal_{j})$ that satisfy $n_v(\Rcal_i)>\ceil{z_u}$ and $n_v(\Rcal_j)<\floor{z_u}$, thus strictly reducing the balance gap
$$G(\Rcal,\bz) = \frac{1}{m}\sum_{v\in V}w_v \sum_{i\in [m]} \min\{|n_v(\Rcal_i)-\floor{z_v}|,|n_v(\Rcal_i)-\ceil{z_v}|\},$$
while preserving the consistency property. We refer to \cite{bansal2015polylogarithmic} for complete details. This procedure runs in polynomial time because, when a node $v$ is being balanced, the value of $\sum_{i\in [m]}\min\{|n_v(\Rcal_i)-\floor{z_v}|,|n_v(\Rcal_i)-\ceil{z_v}|\}$ decreases by at least $2$ at each balancing step.
\end{proof}

\end{proof}

\section{Polynomial time algorithm with additive cost}\label{sec:ccomplexity}
In this section, we explain how we adapt the fractional $k$-server algorithm of \cite{BuchbinderGMN19} to run in polynomial time, albeit at the expense of an additive cost, as stated in Proposition \ref{prop:approximate-smile}. 

\approximateSmile*

The algorithm in \cite{BuchbinderGMN19} operates by finding a solution to a constrained convex optimization problem, which is restated in Equation \eqref{eq:BregmanProjection} below. As the solution can contain irrational numbers, we cannot solve it exactly in finite time or space, so we aim for an approximate solution. The constraint set, which is known in the literature as the assignment polytope \cite{BubeckCLLM18} or anti-server polytope \cite{BuchbinderGMN19}, is defined by an exponential number of constraints, but admits a fast separation oracle (see Remark \ref{rem:sep-oracle-polytope} in Appendix~\ref{ap:ellipsoid}). Thus, we can use the ellipsoid algorithm to find an approximate solution of this problem in polynomial time. 

We first recall some definitions from~\cite{BuchbinderGMN19} relevant to our discussion.
Some of the notation in~\cite{BuchbinderGMN19} is for the more general case where an online algorithm with $k$ servers competes against an offline algorithm with $h$ servers, for $h\le k$. For our purposes, the special case $h=k$ suffices.

For a node $u\in V$, let $n_u=|L_u|$ be the number of leaves in the subtree rooted at $u$. Let $\chi:=\{(u,j)\mid u\in V, j\in [n_u]\}$, and for $u\in V\setminus L$ let $\chi_u:=\{(v,j)\in\chi\mid v\in C_u\}=\{(v,j)\mid v\in C_u, j\in[n_v]\}$. The anti-server polytope employed in~\cite{BuchbinderGMN19}, originally introduced in~\cite{BubeckCLLM18}, is defined as
\begin{align*}
    P:= \bigg\{\bx\in[0,1]^\chi\,\,\bigg\vert \qquad x_{rj}&\ge \1_{j>k}&&\forall j\in [n_r],\\
    \sum_{i\le |S|}x_{ui}&\le \sum_{(v,j)\in S}x_{vj}&&\forall u\in V\setminus L,\, S\subseteq \chi_u\bigg\}.
\end{align*}
For intuition, an integral $k$-server configuration can be encoded by an ``anti-server configuration'' $\bx\in P$ by setting $x_{uj}=0$ if the subtree rooted at $u$ contains at least $j$ servers and $x_{uj}=1$ otherwise. In particular, a leaf $\ell\in L$ contains a server if $x_{\ell1}=0$.

Let $\delta=\frac{1}{2k+1}$ and $P_\delta:=\{\bx\in P\mid x_{\ell1}\ge \delta\text{ for all }\ell\in L\}$. The online algorithm in~\cite{BuchbinderGMN19} maintains an anti-server configuration $\bx(t)$ in $P_\delta$ that satisfies $\sum_{\ell\in L}x_{\ell1}(t)=n-k$. This is then converted into a leaf measure $\bz(t)\in\Zcal$ of mass $k+\frac{1}{2}$ by setting $z_\ell(t):=\frac{1-x_{\ell1}(t)}{1-\delta}$ for $\ell\in L$ and $z_u=\sum_{\ell\in L_u}z_\ell$, thus yielding a fractional $\left(k+\frac{1}{2}\right)$-server algorithm. A simple transformation (first shown in~\cite{BubeckCLLM18}, analogous to step \ref{it:kPlusHalfToK} in our proof of Proposition~\ref{prop:frac-barely-frac}) then shows that any fractional $\left(k+\frac{1}{2}\right)$-server algorithm can be converted into a fractional $k$-server algorithm while increasing the cost by only a constant factor.

When a request arrives at a leaf $\rho(t)\in L$ and the previous anti-server configuration is $\bx(t-1)\in P$, the algorithm of~\cite{BuchbinderGMN19} chooses the new anti-server configuration as
\begin{align}
    \bx(t):= \argmin_{\substack{\bx\in P_\delta\\ x_{\rho(t)1}=\delta}} D(\bx\parallel \bx(t-1)),\label{eq:BregmanProjection}
\end{align}
where
\begin{align}\label{eq:divergence}
    D(\bx\parallel \bx'):=\sum_{u\in V\setminus\{r\}}w_u\sum_{j}\left((x_{uj}+\delta)\log\frac{x_{uj}+\delta}{x'_{uj}+\delta} - x_{uj} + x'_{uj}\right).
\end{align}
Observe that the associated leaf measure $\bz(t)$ serves the request, since $z_{\rho(t)}=\frac{1-x_{\rho(t)1}}{1-\delta}=1$.

In the following Lemma,
\begin{align*}
    \left\|\bx(t)-\bx(t-1)\right\|_w^+:=\sum_{u,i}w_u\left(x_{ui}(t)-x_{ui}(t-1)\right)^+
\end{align*}
denotes the `positive movement cost' associated with the transition from $\bx(t-1)$ to $\bx(t)$. Over any sequence of steps, the total positive and negative movement are equal up to a bounded additive error, and hence the sum of these terms over all time steps upper bounds the movement cost of $\bz(\cdot)$ up to a constant factor.

\begin{lemma}\label{lem:potential equation}
    Let the underlying tree be a $\tau$-HST with $\tau\ge 10$. Let $\by(t)\in P$ be the anti-server configuration at time $t$ corresponding to an optimal offline algorithm. For any configuration~$\bx(t-1)\in P_\delta$ at time $t-1$, the new configuration $\bx(t)$ defined in~\eqref{eq:BregmanProjection} satisfies
    \begin{align*}
        \left\|\bx(t)-\bx(t-1)\right\|_w^+ &+ Q(\bx(t),\by(t)) - Q(\bx(t-1),\by(t-1))\le \Ocal\left(\log^2k\right)\cdot\Delta_t\OPT,
    \end{align*}
    where $Q(\bx,\by)$ is a potential function that is $O(D\log^2 k)$-Lipschitz in any $x_{ui}$, and $\Delta_t\OPT$ is the cost of the offline algorithm at time $t$.
\end{lemma}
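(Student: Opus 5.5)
This lemma is essentially the one-step potential inequality established in \cite{BuchbinderGMN19} for their Bregman-projection algorithm. So the plan is to import their analysis rather than rebuild it, checking that their hypotheses hold for the special case $h=k$ with $\tau\ge 10$. Concretely, the potential will be taken of the form
\[
Q(\bx,\by) = \Ocal(\log k)\cdot D(\by\parallel \bx) + \Psi(\bx,\by),
\]
where $D$ is the divergence in \eqref{eq:divergence} and $\Psi$ is the auxiliary weighted-depth term used in \cite{BuchbinderGMN19}. That term compensates for the fact that each level of the HST contributes a $\log k$ factor, which is where the $\log^2 k$ comes from.

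I would split each request step into two phases. In the \emph{offline phase}, $\by(t-1)$ moves to $\by(t)$ while $\bx$ stays fixed. The increase of $Q$ is then bounded by its Lipschitz constant in $\by$ times the offline movement. Since $x_{ui}+\delta\ge\delta$ with $\delta=\frac{1}{2k+1}$, the partial derivatives $w_u\log\frac{y_{ui}+\delta}{x_{ui}+\delta}$ are $\Ocal(w_u\log k)$. Summing over the $\Ocal(\log k)$-weighted structure gives $\Ocal(\log^2k)\,\Delta_t\OPT$.

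In the \emph{online phase}, $\bx$ moves by the projection \eqref{eq:BregmanProjection}. Here I would use the generalized Pythagorean inequality for Bregman projections onto the convex set $\{\bx\in P_\delta: x_{\rho(t)1}=\delta\}$. The set contains $\by(t)$ after the standard shift that makes $y_{\rho(t)1}$ compatible with $\delta$, and this shift costs only an additive term charged to $\OPT$. The inequality gives $D(\by\parallel\bx(t))+D(\bx(t)\parallel\bx(t-1))\le D(\by\parallel\bx(t-1))$, so the divergence part of $Q$ does not increase. The remaining task is to show that the positive movement $\|\bx(t)-\bx(t-1)\|_w^+$ is paid for by a decrease of $Q$. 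For this I would write the KKT conditions of \eqref{eq:BregmanProjection}: $\log\frac{x_{ui}(t)+\delta}{x_{ui}(t-1)+\delta}$ equals a combination of dual variables of the tight anti-server constraints. Pairing these with $\by-\bx$ shows that the drop in $D(\by\parallel\bx)$ is at least a constant fraction of the positive movement, up to the error absorbed by $\Psi$. The HST separation $\tau\ge10$ is used to ensure that movements at a node dominate the induced changes at its descendants, so the geometric sums over levels converge.

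Finally, the Lipschitz claim on $Q$ in each $x_{ui}$ follows from the same derivative bound. There are $\Ocal(\log k)$ from the logarithm with $\delta$-shift, times the $\Ocal(\log k)$ multiplier, times $w_u\le D$. The main obstacle is the online-phase accounting, namely extracting the movement bound from the KKT multipliers over the exponentially many polytope constraints. I would rely on the corresponding lemma of \cite{BuchbinderGMN19} verbatim and only verify that the proof there uses nothing beyond $\bx(t-1)\in P_\delta$, so that it applies to an arbitrary (possibly approximate) previous configuration as stated.
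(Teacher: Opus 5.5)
Your outer structure matches the paper's. You import the analysis of \cite{BuchbinderGMN19}, bound the offline step by the $\Ocal(\log(1/\delta))$ Lipschitz constant of the divergence in $\by$ (Lemma~3.6 of \cite{BuchbinderGMN18arxiv}), and take $Q$ to be a $\Theta(\log k)$ multiple of a divergence plus corrections. The gap is in the online step, which is the entire content of this lemma once everything is imported.

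The generalized Pythagorean inequality gives nothing usable there. The drop it guarantees, $D(\bx(t)\parallel\bx(t-1))$, is second order in the step size, so it cannot pay for linear movement. (Also, $\by(t)$ violates $x_{\ell 1}\ge\delta$ at every leaf holding an offline server, not only at $\rho(t)$.) More importantly, your claim that the KKT pairing makes the divergence drop at least a constant fraction of the positive movement, up to $\Psi$, is not what \cite{BuchbinderGMN19} prove. What they prove, and what the paper combines, is Inequality~(3.13) and Theorem~4.1 of \cite{BuchbinderGMN18arxiv}:
\[
\Phi(\by(t)\parallel\bx(t-1))-\Phi(\by(t)\parallel\bx(t))\ \ge\ \tfrac{1}{3}\sum_{j}A^t_{rj}+\tfrac{2}{3}\bigl(W(\bx(t))-W(\bx(t-1))\bigr)
\]
and
\[
\|\bx(t)-\bx(t-1)\|_w^+\ \le\ c_1\log\tfrac{k}{\delta}\sum_{j}A^t_{rj}+\Psi(\bz(t))-\Psi(\bz(t-1)),
\]
where $\Phi$ is the divergence stripped of its linear terms. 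So the drop is only guaranteed to cover a $1/\Theta(\log k)$ fraction of the movement. That loss, not a per-level effect absorbed by $\Psi$, is the second $\log k$. It is precisely why $\Phi$ carries the multiplier $3c_1\log\frac{k}{\delta}$. If your constant-fraction claim held, that multiplier would be unnecessary and the same bookkeeping would give $\Ocal(\log k)$.

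Consequently, your potential does not close as written. The term $\frac{2}{3}\Delta W$ must be cancelled by adding exactly $2c_1\log\frac{k}{\delta}\,W(\bx)$ to $Q$. This gives the paper's $Q=3c_1\log\frac{k}{\delta}\,\Phi(\by\parallel\bx)+2c_1\log\frac{k}{\delta}\,W(\bx)-\Psi(\bz)$. If you use the full divergence $D(\by\parallel\bx)=\Phi(\by\parallel\bx)+W(\bx)-W(\by)$ with the same multiplier, $W(\bx)$ enters with coefficient $3c_1\log\frac{k}{\delta}$ instead. That leaves a residual $c_1\log\frac{k}{\delta}\,(W(\bx(t))-W(\bx(t-1)))$, which can be as large as $c_1\log\frac{k}{\delta}$ times the positive movement and cannot be absorbed by the left-hand side. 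You would have to subtract $c_1\log\frac{k}{\delta}\,W(\bx)$ explicitly.

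Note also that $\Psi$ depends only on the online measure $\bz$, which is why it does not change in the offline step. Your $\Psi(\bx,\by)$ would need its own offline bound.

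Finally, your Lipschitz check reuses the $\by$-derivative. The relevant derivative is $\partial D(\by\parallel\bx)/\partial x_{ui}=w_u\bigl(1-\frac{y_{ui}+\delta}{x_{ui}+\delta}\bigr)$, which is controlled by $1/\delta=\Ocal(k)$, not by $\log(1/\delta)$. So that step needs a separate argument. For the use in Proposition~\ref{prop:approximate-smile}, any bound polynomial in $k$ and $D$ suffices.
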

\begin{proof}
    The function $Q$ is a combination of several potential functions used in \cite{BuchbinderGMN19}. We prove our lemma by combining several inequalities that have been derived in \cite{BuchbinderGMN19}. Since the proof of some of these inequalities is rather involved, we do not repeat it here and instead refer to the original paper. When citing specific results from \cite{BuchbinderGMN19}, we will refer to the numbering used in the corresponding full version of the paper \cite{BuchbinderGMN18arxiv}.
    
    First, define the potential function
    \begin{align*}
        \Phi(\by\parallel\bx)&:= \sum_{u\in V\setminus\{r\}}w_u\sum_{j}(y_{uj}+\delta)\log\frac{y_{uj}+\delta}{x_{uj}+\delta}.
    \end{align*}
    where $\by\in P$ and $\bx\in P_\delta$ refer to the anti-server configurations of the offline and online algorithm, respectively. By \cite[Lemma 3.6]{BuchbinderGMN18arxiv}, when the offline algorithm moves from $\by(t-1)$ to $\by(t)$, the change in $\Phi$ can be bounded by
    \begin{align}
        \Phi(\by(t)\parallel\bx(t-1)) - \Phi(\by(t-1)\parallel\bx(t-1)) \le (1+\delta)\log\left(1+\frac{1}{\delta}\right)\cdot\Delta_t\OPT.\label{eq:BGMNOpt}
    \end{align}
    
    Further, \cite[Theorem~4.1]{BuchbinderGMN18arxiv} shows for some constants $c_1, c_2, c_3> 0$ that
    \begin{align}
        \left\|\bx(t)-\bx(t-1)\right\|_w^+ \le c_1\log\frac{k}{\delta}\cdot \sum_j A_{rj}^t+\Psi(\bz(t))-\Psi(\bz(t-1)),\label{eq:BGMNThm4.1}
    \end{align}
    where $\bz(t)$ is the leaf measure of mass $k+\frac{1}{2}$ corresponding to $\bx(t)$ and
    \begin{align*}
        \Psi(\bz):=c_2\sum_{u}\frac{w_u}{1-\delta}\int_{z_u(0)}^{z_u}\ln\left(1+\frac{z}{\delta}\right)dz - c_3\sum_{u}w_u\left(z_u-\frac{2}{3}z_{p(u)}\right)^+
    \end{align*}
    and $A_{rj}^t$ are quantities that by~\cite[Inequality~(3.13)]{BuchbinderGMN18arxiv} satisfy
    \begin{align}
        \Phi(\by(t)\parallel\bx(t-1)) - \Phi(\by(t)\parallel\bx(t))\ge \frac{1}{3}\sum_{j}A_{rj}^t + \frac{2}{3}(W(\bx(t)) - W(\bx(t-1))),\label{eq:BGMNA}
    \end{align}
    for
    \begin{align*}
        W(\bx)&:= \sum_{u}\sum_{j}w_u x_{uj}.
    \end{align*}
    Setting
    \begin{align*}
        Q(\bx, \by) := 3c_1\log\frac{k}{\delta}\cdot\Phi(\by\parallel\bx) + 2c_1\log\frac{k}{\delta}\cdot W(\bx) - \Psi(\bz)
    \end{align*}
    and combining inequalities~\eqref{eq:BGMNOpt}, \eqref{eq:BGMNThm4.1} and \eqref{eq:BGMNA} yields
    \begin{align*}
        \left\|\bx(t)-\bx(t-1)\right\|_w^+ &+ Q(\bx(t),\by(t)) - Q(\bx(t-1),\by(t-1)) \\
        &\le 3c_1\log\frac{k}{\delta}\cdot(1+\delta)\log\left(1+\frac{1}{\delta}\right)\cdot\Delta_t\OPT \\
        &= \Ocal\left(\log^2k\right)\cdot\Delta_t\OPT.
    \end{align*}
    Using $w_u\le D$ and $1/\delta=O(k)$, it is easy to check that $Q$ is $O(D\log^2k)$-Lipschitz in any $x_{ui}$.
\end{proof}

Equipped with Lemma \ref{lem:potential equation}, we can now conclude this section with the proof of Proposition \ref{prop:approximate-smile}.
\begin{proof}[Proof of Proposition \ref{prop:approximate-smile}] The algorithm is defined by induction by iteratively obtaining an approximate solution of Equation \eqref{eq:BregmanProjection} with the Ellipsoid method (Corollary \ref{cor:ellipsoid2} in Appendix \ref{ap:ellipsoid}) with a precision that we will specify later. This defines a sequence of points in the anti-server polytope $\bx^\epsilon(t)$, which are then transformed into a fractional $k$-server configuration $\bz(t)$ via the transformations explained above, with a loss of only a constant factor in the movement cost in the process.  

By Corollary \ref{cor:ellipsoid2}, the sequence $\bx^{\epsilon}(t)$ satisfies the following approximate version of Lemma \ref{lem:potential equation}: 
\begin{align*}
        \left\|\bx^\epsilon(t)-\bx^\epsilon(t-1)\right\|_w^+ + Q(\bx^\epsilon(t),\by(t)) - Q&(\bx^\epsilon(t-1),\by(t-1))\\
        &\le \Ocal\left(\log^2k\right)\cdot\Delta_t\OPT + \Ocal(\epsilon' D\log^2 k),
    \end{align*}
where we used the fact that $Q(\bx,\by)$  is $O(D\log^2 k)$-Lipschitz in its first argument, and that $||\cdot||_w^+$ is $O(D)$-Lipschitz. By telescopic sum, 
\begin{align*}
    \sum_{t\leq T}\left\|\bx^\epsilon(t)-\bx^\epsilon(t-1)\right\|_w^+ \leq \Ocal(\log^2 k)\OPT(\rho(\cdot)) + \Ocal(\epsilon'D\log^2 k T) + \Ocal(D\log^2 k),
\end{align*}
where we use that $Q(\bx,\by)$ is bounded by $\Ocal(D\log^2 k)$. For the right choice of precision in our application of Corollary \ref{cor:ellipsoid2} of order $\epsilon/(D\log^2 k)$, we thus have that 
\begin{align*}
    \Cost(\bz(\cdot),\rho(\cdot))\leq \Ocal(\log^2 k)\OPT(\rho(\cdot))+\epsilon T + \Ocal(D\log^2 k).&\qedhere
\end{align*}
\end{proof}

\paragraph{Acknowledgments.} We thank the anonymous reviewers for their constructive feedback. CC is funded by the European Union (ERC, CCOO, 101165139). Views and opinions expressed are however those of the author(s) only and do not necessarily reflect those of the European Union or the European Research Council. Neither the European Union nor the granting authority can be held responsible for them.

\bibliographystyle{alpha}
\bibliography{sample}

\newpage
\appendix

\section{Ellipsoid method}\label{ap:ellipsoid}
In this section, we briefly explain how the ellipsoid method allows us to provide approximate solutions for problems of the form of \eqref{eq:BregmanProjection} in polynomial time. We start with some definitions from \cite{grotschel2012geometric}.

\begin{definition}[Approximate oracles \cite{grotschel2012geometric}]
    An approximate separation oracle for a closed convex set $K\subset \Rbb^d$ is an algorithm that takes as input a rational number $\gamma>0$ and a rational point $\by\in\Qbb^d$, and that, in polynomial time in the encoding length of the inputs either reports that $\by$ is at distance at most $\gamma$ from $K$, or returns a vector $\bc\in \Qbb^d$ satisfying $||\bc||_\infty = 1$ and $\forall \bx\in K \colon \bc^T \bx \leq \bc^T\by +\gamma$. An approximate first order oracle for a convex differentiable function $f:\Rbb^d\rightarrow \Rbb$ is an algorithm that takes as input a rational number $\gamma>0$ and a rational point $\by\in\Qbb^d$, and that, in polynomial time in the encoding lengths of the inputs returns a vector $\bg\in \Rbb^d$ that satisfies $||\bg - \nabla f(\by)||\leq \gamma$. An approximate zeroth-order oracle for $f(\cdot)$ is an algorithm that, for the same input, returns in polynomial time in the encoding lengths of the inputs a value $v$ such that $|f(\by)-v|\leq \gamma$. 
\end{definition}

\begin{theorem}[Central cut ellipsoid, Theorem 3.2.1 of \cite{grotschel2012geometric}]\label{th:central-cut}
    There is an algorithm that takes as input a convex set $K\subset \Rbb^d$ contained in a ball of radius $R$ represented by an approximate separation oracle, and a precision $\epsilon>0$, and in polynomial time in $\left(d,~\log R,~ \log 1/\epsilon\right)$ either outputs a point that is at distance at most $\epsilon$ from $K$ or an ellipsoid of volume at most $\epsilon^{2d}$ that contains $K$. 
\end{theorem}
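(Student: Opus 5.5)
This is the classical central-cut ellipsoid theorem of Grötschel–Lovász–Schrijver, and the paper cites it rather than reproving it. Still, here is how I would prove it. The plan is to run the shallow/central-cut ellipsoid iteration with rational rounding, using the approximate separation oracle as the only access to $K$.

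\textbf{Initialization and main loop.} Start from $E_0$, the ball of radius $R$ containing $K$, with center $\bc_0$ and matrix $A_0=R^2 I$. At step $i$, query the oracle at $\bc_i$ with accuracy $\gamma$; the value of $\gamma$ is fixed below as polynomially small in $\epsilon$ and $1/R$.

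If the oracle reports that $\bc_i$ is within $\gamma\le\epsilon$ of $K$, output $\bc_i$ and stop.

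Otherwise the oracle returns $\bc$ with $\bc^T\bx\le \bc^T\bc_i+\gamma$ for all $\bx\in K$. Then $K\cap E_i$ lies in the slightly shifted half-ellipsoid $\{\bx\in E_i : \bc^T\bx\le \bc^T\bc_i+\gamma\}$. Take the Löwner–John ellipsoid of the exact half-ellipsoid, given by the standard update formulas for $\bc_{i+1}$ and $A_{i+1}$. Its volume is at most $e^{-1/(2(d+1))}$ times that of $E_i$.

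\textbf{Handling the errors.} Two error sources must be absorbed: the additive slack $\gamma$, and rounding $\bc_{i+1},A_{i+1}$ to rationals with polynomially many bits, which is needed to keep encoding lengths bounded. The standard remedy is to blow the new ellipsoid up by a factor $(1+\frac{1}{4d^2})$ before rounding. I then need to check three things:

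1. The enlarged ellipsoid still contains the $\gamma$-shifted half-ellipsoid, provided $\gamma$ is small relative to the smallest axis of $E_i$. Containment of $K$ is what matters here.
2. The volume still shrinks by a factor like $e^{-1/(5d)}$ per step.
3. The eigenvalues of $A_i$ stay in a range $[R^2 2^{-\mathrm{poly}}, R^2 2^{\mathrm{poly}}]$. This keeps both the rounding error and the required $\gamma$ polynomially encodable.

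\textbf{Termination.} After $N=O(d^2\log(R/\epsilon))$ iterations the volume is at most $\epsilon^{2d}$. If no center was accepted by then, output $E_N$, which contains $K$ by the invariant. Each step costs one oracle call plus polynomial arithmetic on polynomially many bits, so the total time is polynomial in $(d,\log R,\log 1/\epsilon)$.

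\textbf{Main obstacle.} The hard part is the numerical invariant: showing that rounding and blow-up together preserve both containment and geometric volume decrease, while the entries of $A_i$ keep polynomially bounded bit size. I would follow the GLS argument, maintaining that $A_i$ is positive definite with $\|A_i\|\le R^2 2^{i}$ and $\|A_i^{-1}\|\le R^{-2}4^{i}$, and choosing the rounding precision $p=O(N)$ bits. Everything else is routine bookkeeping.
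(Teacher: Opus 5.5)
Your proof is correct, but it takes a different route from the paper, which never reproves the central-cut method. The paper takes Gr\"otschel--Lov\'asz--Schrijver's Theorem 3.2.1 as a black box. Its oracle is exactly the one defined in the paper, with the inequality required for every $\bx\in K$, and its failure output is an ellipsoid of volume at most $\epsilon$. The paper then gets the $\epsilon^{2d}$ bound by the one-line substitution $\epsilon'=\epsilon^{2d}$, since $\log(1/\epsilon')=2d\log(1/\epsilon)$ keeps the running time polynomial in $(d,\log R,\log 1/\epsilon)$.

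You instead open the black box and rerun the GLS analysis, with the iteration count $N=O(d^2\log(R/\epsilon))$ tuned directly to the target volume $\epsilon^{2d}$. This is the same iteration bound GLS obtain when run at precision $\epsilon^{2d}$.

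Your route has one small advantage. Because you decouple the oracle accuracy $\gamma\le\epsilon$ from the volume target, the point-output branch is automatically within $\epsilon$ of $K$. The paper's substitution tacitly needs $\epsilon<1$; otherwise $\epsilon^{2d}>\epsilon$, and one should use $\min\{\epsilon,\epsilon^{2d}\}$.

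Your route is not really more self-contained, though. The only nontrivial step is showing that blow-up plus rounding to $p=O(N)$ bits preserves three things: $K\subseteq E_i$, the $e^{-1/(5d)}$ per-step volume decrease, and the bounds on $\|A_i\|$ and $\|A_i^{-1}\|$. That is precisely the GLS lemma, and you defer it back to GLS. So both arguments ultimately rest on the same cited result. The paper's reduction is the more economical way to state this.
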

\begin{remark}
    In the original statement of Theorem \ref{th:central-cut}, the volume of the ellipsoid for the failure case is $\epsilon$, but it can be strengthened to $\epsilon^{2d}$ for a computational cost of same order by running the algorithm with $\epsilon' =\epsilon^{2d}$. In practice, if the polytope is full-dimensional, its volume is lower-bounded by an exponential function of its encoding length (see \cite[Lemma 3.1.35]{grotschel2012geometric}), and the failure case is not raised. 
\end{remark}
We will utilize the following corollary, which concerns the approximate optimization of convex functions. 
\begin{corollary}\label{cor:ellipsoid} There is an algorithm that takes as input a precision $\epsilon>0$, a closed convex set $K\subset \Rbb^d$ contained in a ball of radius $R$ represented by an approximate separation oracle and satisfying that the volume of the intersection of $K$ with any ball centered on a point of $K$ of radius $r$ is at least $\Omega(r^d)$, and a convex function $f(\cdot)$ of $\Rbb^d$ represented by its approximate zeroth-order and first-order oracles, that is $L$-Lipschitz, $\alpha$-strongly convex, and a closed interval of width $w$ to which the minimum value of $f(\cdot)$ belongs, and that, in time polynomial in $(d, \log R, \log 1/\epsilon, \log 1/\alpha, \log L)$ outputs a point $\bx^{*,\epsilon}$ satisfying $||\bx^{*,\epsilon}-\bx^*||\leq \epsilon$ where $\bx^* = \argmin_{\bx\in K}f(\bx)$.
\end{corollary}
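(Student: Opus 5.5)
Corollary~\ref{cor:ellipsoid} follows from Theorem~\ref{th:central-cut} through two reductions. First we find an approximate minimum value of $f$ on $K$ by binary search. Then we use strong convexity to turn a near-optimal value into a point that is close to the true minimizer $\bx^*$.

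\textbf{Deciding the level-set problem.} Let $\eta>0$ be a tolerance fixed below. For a threshold $\theta$, consider
\[
K_\theta := K\cap\{\bx : f(\bx)\le \theta\}.
\]
We build an approximate separation oracle for $K_\theta$ as follows. Given a query $\by$, first call the oracle for $K$; if it returns a cutting vector, return that vector. Otherwise, query the zeroth-order oracle for $f$ at $\by$. If the returned value is at most $\theta+\gamma$, report feasibility. If not, query the first-order oracle to get $\bg\approx\nabla f(\by)$ and return the normalized $\bg/\|\bg\|_\infty$. By convexity, $f(\bx)\ge f(\by)+\nabla f(\by)^T(\bx-\by)$, so every $\bx\in K_\theta$ satisfies $\nabla f(\by)^T\bx\le \nabla f(\by)^T\by$. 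The gradient approximation error of $\gamma$ costs at most $O(\gamma R)$ in this inequality. We then rescale $\gamma$ so the oracle definition holds.

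Running Theorem~\ref{th:central-cut} on $K_\theta$ either returns a point within $\epsilon$ of $K_\theta$, or certifies that $K_\theta$ lies in an ellipsoid of volume at most $\epsilon^{2d}$. In the second case we use the volume hypothesis together with Lipschitzness. If $\theta\ge f(\bx^*)+L r$, then $K_\theta$ contains $K\cap B(\bx^*,r)$, whose volume is $\Omega(r^d)$. For $r\gg\epsilon^2$ this contradicts the volume bound, so the second case can only occur when $\theta< f(\bx^*)+Lr$.

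\textbf{Binary search.} We binary search on $\theta$ over the given interval of width $w$. This takes $O(\log(w/\eta))$ iterations and produces a point $\hat\bx$ within distance $\epsilon_0$ of $K$ with
\[
f(\hat\bx)\le f(\bx^*)+\eta+L\epsilon_0 .
\]
Because $\hat\bx$ need only be close to $K$, rather than inside it, the Lipschitz constant absorbs the gap: replace $\hat\bx$ by a nearby point $\tilde\bx\in K$, accepting an additive $L\epsilon_0$ in value. (Equivalently, one can argue about the projection of $\hat\bx$ onto $K$.)

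\textbf{From value to distance.} By optimality of $\bx^*$, $\nabla f(\bx^*)^T(\bx-\bx^*)\ge 0$ for all $\bx\in K$. Combined with $\alpha$-strong convexity this gives
\[
f(\tilde\bx)-f(\bx^*)\ge \tfrac{\alpha}{2}\|\tilde\bx-\bx^*\|^2 .
\]
Hence $\|\hat\bx-\bx^*\|\le \epsilon_0+\sqrt{2(\eta+2L\epsilon_0)/\alpha}$. Choosing $\eta,\epsilon_0$ of order $\alpha\epsilon^2/L$ (and $\le\epsilon/2$) makes this at most $\epsilon$. All precisions involved have logarithms polynomial in $(\log 1/\epsilon,\log 1/\alpha,\log L,\log R)$. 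The number of binary-search steps is $O(\log(w/\eta))$, and I will assume $\log w$ is polynomially bounded, which presumably holds for the intended use. The overall running time is therefore polynomial.

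\textbf{Main obstacle.} The delicate step is handling the ellipsoid's failure case and the fact that its output lies only near $K$, not in it. This is exactly where the volume lower bound $\Omega(r^d)$ on $K\cap B(\cdot,r)$ enters. It guarantees that any level set whose threshold exceeds the optimum by $Lr$ has volume too large to fit in an ellipsoid of volume $\epsilon^{2d}$. Getting the constants in this argument to line up with the inexact oracles is the part I expect to need the most care.
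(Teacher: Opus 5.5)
Your proposal is correct and follows essentially the same route as the paper: level sets $K\cap\{f\le A\}$ with a separation oracle built from the oracle for $K$ plus an approximate gradient of $f$, the volume lower bound combined with Lipschitzness to rule out the ellipsoid's failure case once $A$ exceeds $f(\bx^*)$ by a small margin, binary search over the given interval of width $w$, and strong convexity to turn the value gap into distance to $\bx^*$. Two of your details are accurate refinements rather than departures: you invoke the first-order optimality condition explicitly, which yields the correct $\tfrac{\alpha}{2}\|\tilde\bx-\bx^*\|^2$ bound, and you observe that the number of binary-search steps depends on $\log w$, as does the paper's own count $\Ocal(\log w+\log 1/\epsilon')$.
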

\begin{proof}
Define $f^* = \min_{\bx\in K}f(\bx)$. for a fixed threshold $A\in \Rbb$, consider the closed convex set defined by 
$$K_A = \{\bx \colon f(\bx)\leq A\}\cap K.$$
We start by assuming that we have $A\in [f^*+\epsilon',f^*+2\epsilon']$, where $\epsilon' = \alpha\epsilon^2/8$, and that we are given $\bx^{*,\epsilon}$ that is at a distance at most $\epsilon/2$ from a point $\bx_A$ in $K_A$. Then, by strong convexity of $f(\cdot)$, we have $\alpha||\bx^*-\bx_A||^2\leq f(\bx_A)-f(\bx^*) \leq \alpha\epsilon^2/4$ so that $||\bx_A-\bx^*||\leq \epsilon/2$. And by triangle inequality, $||\bx^{*,\epsilon}-\bx^*|| \leq \epsilon/2 +\epsilon/2=\epsilon$.

We now explain how to find a value of $A\in [f^*+\epsilon',f^*+2\epsilon']$, and to obtain the corresponding point $\bx^{*,\epsilon}$ in polynomial time. We will proceed by dichotomy. We know that the minimum value of the function is within an interval of width $w$. We maintain the interval $[A^-_i,A^+_i]$ of candidates for $A$, starting with $[A_0^-,A_0^+]$ being that interval of width $w$. We run the central cut ellipsoid algorithm for $K_{A_i}$ where $A_i$ is the midpoint of the interval, $A_i = (A_i^++A_i^-)/2$, and, if it returns a solution, we update the interval to $[A_i^-,A_i]$ and otherwise to $[A_i,A_i^+]$. We note that in only $\Ocal(\log w + \log 1/\epsilon')$ 
steps, the width of the interval is smaller than $\epsilon'/2$. We also note that $A_i^-$ cannot be greater than $f^*+\epsilon'$, because, for any $A\geq f^*+\epsilon'$, we have that the volume of $K_A$ is in $\Omega((\epsilon'/L)^d)$ because it contains the intersection of $K$ with a ball of radius $\epsilon'/L$. Finally, running the central cut ellipsoid algorithm on $A_i^+\in [f^*+\epsilon',f^*+2\epsilon']$ returns the desired output $\bx^{*,\epsilon}$. 

To conclude, we need to argue that for any value of $A$, we can run the central cut ellipsoid algorithm for $K_A$. Note that $K_A$ is contained in a closed convex set included in a ball of radius $R$ and that an approximate separation oracle for $K_A$ is given at $\by \in \Qbb^d$ by (1) an approximate separation oracle for $K$ at $\by$ if $\by \not\in K$ (2) an approximate gradient of $f(\cdot)$ at $\by$ if $f(\by)\geq A$. Thus, applying Theorem~\ref{th:central-cut}, the central cut ellipsoid method applied to $K_A$ and $\epsilon/2$ outputs either a point that is at distance at most $\epsilon/2$ from $K_A$ or a certificate that the volume of $K_A$ is less than $(\epsilon/2)^{2d}$.
\end{proof}

We now apply this theorem to real functions of the form $f(\bx) = D(\bx\parallel\bx(t-1))$ for the divergence defined in \eqref{eq:divergence} and some fixed vector $\bx(t-1)$ and to the anti-server polytope $P_\delta$ (where we drop the variable $x_{\rho(t)1}$ which is set equal to $\delta$). The dimension $d$ of the polytope is bounded by $n^2$, the function $f(\cdot)$ is $L$-Lipschitz with $L = O(D\log k)$ and $\alpha$-strongly convex with $\alpha = \Omega(1)$. There exists a separation oracle for the anti-server polytope (Remark \ref{rem:sep-oracle-polytope}), and, furthermore, by Taylor expansions of the $\log(\cdot)$ on the interval $[\delta,1+\delta]$, approximate zeroth and first-order oracle for $f(\cdot)$ are known. Finally, we also note that for any point $\bx$ at a distance $\eta$ from the antiserver polytope, there is a polynomial time procedure (specifically, enforcing the constraints from top to bottom in the tree) that returns a point $\bx'$ in the antiserver polytope at distance at most $O(\eta)$ from $\bx$. Therefore, applying Corollary \ref{cor:ellipsoid}, we have the following result.
\begin{corollary}\label{cor:ellipsoid2} For any $\epsilon>0$, for any HST with $n$ nodes and diameter $D$, and for any element $\bx(t-1)$ of the antiserver polytope, there is a polynomial time algorithm that provides an approximate solution $\bx^{\epsilon}$ of \eqref{eq:BregmanProjection} satisfying $||\bx(t)-\bx^\epsilon||\leq \epsilon$ in time polynomial in $(\log 1/\epsilon,n,\log D)$.
\end{corollary}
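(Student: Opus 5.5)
The plan is to obtain Corollary~\ref{cor:ellipsoid2} by applying Corollary~\ref{cor:ellipsoid} to $f(\bx)=D(\bx\parallel\bx(t-1))$ over $K=P_\delta\cap\{x_{\rho(t)1}=\delta\}$. We drop the fixed coordinate, so $d\le|\chi|\le n^2$. We then check each hypothesis in turn and make sure every parameter is polynomially bounded in encoding length.

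First, the geometric hypotheses. $K\subseteq[0,1]^\chi$, so $R=\sqrt{d}$ works. The separation oracle is the one of Remark~\ref{rem:sep-oracle-polytope}, augmented with the trivial box constraints and $x_{\ell 1}\ge\delta$. To find the most violated constraint $\sum_{i\le|S|}x_{ui}\le\sum_{S}x_{vj}$ for fixed $u$ and $|S|=s$, take the $s$ smallest $x_{vj}$ in $\chi_u$; this gives an exact oracle, which is a fortiori approximate.

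Second, the analytic hypotheses on $f$. Each summand has the form $w_u\phi(x_{uj})$ with $\phi(x)=(x+\delta)\log\frac{x+\delta}{x'+\delta}-x+x'$. Since $\phi''(x)=1/(x+\delta)\ge 1/(1+\delta)$ on $[0,1]$, $f$ is $\alpha$-strongly convex with $\alpha=\Omega(\min_u w_u)$. After scaling so that the minimum weight is $1$ (distances are integral), this gives $\alpha=\Omega(1)$. Since $|\phi'|\le\log\frac{1+\delta}{\delta}=O(\log k)$, we get $L=O(D\log k\cdot\sqrt d)$, which is polynomial. Approximate zeroth- and first-order oracles come from computing $\log$ to $\gamma$ precision via Taylor series on $[\delta,1+\delta]$, in time polylogarithmic in $1/\gamma$. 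For the interval containing the minimum value, we use $0\le f\le\sum_u w_u n_u\cdot O(\log k)$, so $w=O(n^2D\log k)$.

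The main obstacle is the volume hypothesis of Corollary~\ref{cor:ellipsoid}: every ball of radius $r$ centered in $K$ must meet $K$ in volume $\Omega(r^d)$. This hypothesis also hides a possible failure of full-dimensionality, for instance through the constraint $x_{rj}\ge \1_{j>k}$ together with $x\le 1$, which pins $x_{rj}=1$ for $j>k$. I would first eliminate all such pinned coordinates, along with the root's coordinates fixed by the mass constraint $\sum_\ell x_{\ell1}=n-k$, by reparametrizing.

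For the remaining polytope I would avoid proving the fat-ball property directly. Instead I would use the remark that an $\eta$-approximately feasible point can be repaired to a feasible point at distance $O(\eta)$ by enforcing the constraints from top to bottom. Concretely, run Corollary~\ref{cor:ellipsoid} on a slightly enlarged body $K^{+\eta}=\{\bx:\text{constraints hold up to slack }\eta\}$. This body is full-dimensional, so the volume condition holds with constants depending polynomially on $\eta$. Take its approximate minimizer, then repair it. Strong convexity and $L$-Lipschitzness bound how far the minimizer of $f$ over $K^{+\eta}$ is from the true minimizer $\bx(t)$: the distance is $O(\sqrt{L\eta/\alpha})$. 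Choosing $\eta$ polynomially small in $\epsilon$ keeps this below $\epsilon/3$, and the repair step adds $O(\eta)$ more.

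With every parameter polynomial in $(n,\log D,\log 1/\epsilon)$, Corollary~\ref{cor:ellipsoid} gives a point within $\epsilon$ of $\bx(t)$ in the claimed time.
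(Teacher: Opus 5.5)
Your proposal is correct and follows the paper's skeleton. Like the paper, you apply Corollary~\ref{cor:ellipsoid} to $f=D(\cdot\parallel\bx(t-1))$ over $P_\delta$ with $x_{\rho(t)1}$ dropped, using $d\le n^2$, the prefix separation oracle of Remark~\ref{rem:sep-oracle-polytope}, Taylor-series oracles, $L=O(D\log k)$, $\alpha=\Omega(1)$, and the top-down repair. You depart from the paper only at the one hypothesis it never checks. The paper applies Corollary~\ref{cor:ellipsoid} directly to $P_\delta$ and uses the repair only to turn the ellipsoid output into a feasible point. It does not verify the fat-ball volume condition, which fails as literally stated because $x_{rj}=1$ is pinned for $j>k$. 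Moreover, once $\sum_\ell x_{\ell1}=n-k$ is imposed, $x_{rj}=0$ is pinned for $j\le k$, and every constraint with $S=\chi_u$ becomes an implicit equality.

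Your steps are what actually make the corollary applicable: you eliminate the root coordinates, minimize over the full-dimensional slack body $K^{+\eta}$, and transfer back via the $O(\sqrt{L\eta/\alpha})$ bound. As a bonus, the elimination also makes the strong-convexity hypothesis literally true, since the root coordinates do not appear in \eqref{eq:divergence} at all.

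The price is three ingredients you should state explicitly.
\begin{itemize}
\item Your transfer bound needs a slack-to-distance statement: every point satisfying all constraints up to slack $\eta$ lies within $H\eta$ of $K$. This is stronger than the paper's remark, which concerns points already at distance $\eta$ from the polytope. Hoffman's bound supplies it with $\log H$ polynomial, which suffices because only $\log(1/\eta)$ enters the running time.
\item The volume constant of $K^{+\eta}$ is of order $(\eta/\mathrm{poly}(n))^d$ rather than an absolute constant. The ellipsoid's failure threshold must therefore be set below it, which is again harmless since only its logarithm matters.
\item The final repair moves the point by $O(\epsilon+\eta)$, not $O(\eta)$, because the ellipsoid output is only $\epsilon/3$-close to the relaxed minimizer. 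Rescaling $\epsilon$ absorbs this.
\end{itemize}
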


\begin{remark}[Separation oracle for the antiserver polytope]\label{rem:sep-oracle-polytope} Let $n_u$ be the number of leaves below $u$. Recall that $\chi(u)$ is the set of pairs $(v,i)$ where $v$ is a child of $u$ and $i \leq n_v$. Let $(v_1,i_1), (v_2, i_2), \dots$ be the list of elements of $\chi(u)$ in increasing order of their $x$-values. Note that a constraint between u and its children is violated if and only if the constraint of a set S that is a prefix of this list is violated. Since there are only $n_u$ prefixes, for each $u$, we only need to check polynomially many constraints to find a violated constraint, if one exists.
\end{remark}

\end{document}